\newtheorem{corol}{Corollary}
\newtheorem{lemma}{Lemma}
\newtheorem{definition}{Definition}
\DeclarePairedDelimiterX{\infdivx}[2]{(}{)}{%
	#1\;\delimsize\|\;#2%
}
\begin{document}

\newtheorem{theorem}{Property}
\newtheorem{remark}{Remark}
\newtheorem{proposition}{Proposition}[section]

\theoremstyle{definition}
\newtheorem{assumptions}{Assumptions}[section]
\newtheorem{example}{Example}[section]
\newtheorem{rmk}{Remark}[section]
\newtheorem{conj}{Conjecture}[section]

\newcommand{\Det}{\text{Det}}
\newcommand{\ergo}{\mathcal{E}}
\newcommand{\deltaoff}{\Delta_{off}}
\newcommand{\en}{\mathfrak{E}}
\newcommand{\pass}{^{\downarrow}}
\newcommand{\state}{\hat\rho}
\newcommand{\ham}{\hat{H}}
\newcommand{\hilb}{\mathcal{H}}

\newcommand{\sigmax}{{\hat{\sigma}_x}}
\newcommand{\sigmay}{{\hat{\sigma}_y}}
\newcommand{\sigmaz}{{\hat{\sigma}_z}}

	\title{Zero-Fluctuation Quantum Work Extraction}

\author{Raffaele Salvia}
\email[]{raffaele.salvia@sns.it}
\affiliation{Scuola Normale Superiore, I-56127 Pisa, Italy}

\author{Vittorio Giovannetti}
\affiliation{NEST, Scuola Normale Superiore and Istituto Nanoscienze-CNR, I-56127 Pisa, Italy}

\begin{abstract}
We study the possibility of deterministic protocols for extracting work from quantum systems.
Focusing on the two-point measurement work extraction scenario, we prove that, with enough copies of the system, such zero-fluctuation protocols always exist if the Hamiltonian has a rational spectrum. Leveraging this result, we show that for any Hamiltonian, it is possible to construct an unitary driving protocol on sufficiently many copies of the system with work fluctuations strictly bounded within an arbitrary interval $\pm \delta$, albeit requiring exponentially many copies in $1/\delta$.\end{abstract}

	\maketitle

\section{Introduction}
The definition of thermodynamic work in quantum systems has been a longstanding question~\cite{Scovil1959, BochkovKuzovlev1977, Alicki1979, Skrzypczyk2014, Brandao2015, DeffnerReview2019, Ahmadi2023}: at microscopic scales, the work that can be extracted from a system acquires a stochastic nature, due to thermal~\cite{Jarzynski2011, Lahiri2021} and quantum fluctuations~\cite{Dahlsten2011, Horodecki2013, Aberg2013, Gemmer2015}.
A prominent paradigm for defining quantum work is the two-point measurement (TPM) scheme~\cite{twopoint1,twopoint2,twopoint3,twopoint4}, which provides a conceptually simple operational paradigm for defining fluctuating work in the quantum regime, and forms the basis for various quantum fluctuation theorems\cite{Talkner2007, Hanggi2015, Halpern2015, Lostaglio2018}.
The TPM protocol operates as follows: first, a projective energy measurement is performed on the system, collapsing its state into an energy eigenstate $\ket{\epsilon_i}$. Next, the system evolves unitarily under the action of an external agent. Finally, a second energy measurement reveals the final energy state $\ket{\epsilon_j}$ after evolution. By comparing the initial and final energies $\epsilon_i$ and $\epsilon_j$, one can define the stochastic work performed on the system as $w = \epsilon_i - \epsilon_j$. Repeating this TPM procedure many times generates a work probability distribution $\mathrm{P}(w)$.

A key optimization goal is to extract the maximum process average work $\langle W \rangle$ by suitably choosing the driving unitary $\hat{U}$. A fundamental bounds on $\langle W \rangle$ is provided by the \emph{ergotropy}~\cite{def_ergo, Alicki2013} - the energy difference between the initial state of the quantum system and its corresponding \emph{passive state}~\cite{Pusz1978, Lenard1978}.
More sophisticated protocols have been conceived in literature, that employ non-local operations to suppress the fluctuations of the extracted work $w$ around its average value $\langle W \rangle$~\cite{Renes2016, vanderMeer2017, Chubb2018, Friis2018, McKay2018}. 
In particular, \cite{Perarnau2018} showed that by collectively processing $n$ copies of a quantum system, the probability $\mathrm{P}\left(\lvert w - \langle W \rangle \rvert > \delta \right)$ of getting work fluctuations larger than a threshold $\delta > 0$ can be made to decay exponentially in $n$. The concept of ``$\epsilon$-deterministic'' work extraction has also been proposed, to indicate protocols that completely suppress fluctuations, except for a small failure probability~\cite{Horodecki2013, Salek2017}.

Our work goes beyond~\cite{Perarnau2018} to construct explicit protocols where the work fluctuations can be completely eliminated, such that the extracted work takes a single deterministic value $W^{\rm{(det)}}$. We prove that such protocols always exist for systems with rational spectra: if we collectively process multiple copies, then a finite $W^{\rm (det)} > 0$ can be extracted with zero fluctuations. For irrational spectra, for which a deterministic work extraction protocol may not be possible, we prove that with a sufficiently large number of copies we can always find a TPM work extraction protocol whose fluctuations can be strictly bound by an arbitrarily small constant (i.e., such that $\mathrm{P}\left(\lvert w - \langle W \rangle \rvert > \delta \right) = 0$).
Compared to the results of~\cite{Perarnau2018}, our zero-fluctuation protocols apply to a narrower regime of parameter space, but provide the strongest possible guarantee on work fluctuations by eliminating them completely. We identify permutations of energy levels between multiple copies that enable zero-fluctuation work extraction. For general spectra, we provide a stronger constraints on fluctuations than \cite{Perarnau2018} - ensuring that the extracted work $w$ from its expected value can be strictly bounded in a narrow band. The concept of work extraction with bounded fluctuations was firstly introduced in \cite{RichensMasanes2016}, in which the Authors show the existence of thermodynamic cycles with bounded fluctuation in some qubit and qutrit quantum systems; in this work, we generalize their results by providing a way to contruct bounded-fluctuations work extraction protocol for any system Hamiltonian $\ham$.
Our zero-fluctuation protocols could find applications in quantum heat engines or batteries where reliable work output is critical. The concept may also extend to bounding fluctuations of other quantities through global quantum operations, and help to introduce designs for stable quantum devices functioning in the finite copy regime.

The rest of manuscript is organised as follows.
In Sec.~\ref{nota-sec} we define the notation.
In Sec.~\ref{sec:formal_intro} we review the structure of TPM measurements,
and formally introduce the problem of finding the 
maximum amount of work $W^{\rm (det)}_{\max}(\hat{\rho}; \ham)$ that can be extracted deterministically from a quantum system described by an Hamiltonian $\ham$ initialised in the state $\hat\rho$.
In Sec.~\ref{sec:PRELIMINARIES}  we present some basic properties of the functional 
$W^{\rm (det)}_{\max}(\hat{\rho}; \ham)$, showing that it only depends on the spectrum of the
Hamiltonian, and on the occupancy levels of the input state $\hat{\rho}$. 
In Sec.~\ref{sec:superadditivity} we show that $W^{\rm (det)}_{\max}(\hat{\rho}; \ham)$  is super-additive in the number $n$ of copies of the system, and define the \emph{asymptotic maximum deterministic work-extraction rate} $\mathcal{R}(\mathcal{A}, \ham)$, which quantifies how much can be deterministically retrieved from a large $n\to\infty$ number of copies of the system. In Sec.~\ref{sec:upper_bound}, some upper bound for $\mathcal{R}(\mathcal{A}, \ham)$ are presented. Sec.~\ref{sec:examples} is devoted to the presentation of
same simple examples which are useful to shed light on the problem. 
In Sec.~\ref{sec:rational_spectra}, we show that if the eigenvalues of the system Hamiltonian $\ham$ are commensurable, then it is always possible to extract deterministically a non-zero amout of work for a sufficiently large number of copies of the system (i.e., $\mathcal{R}(\mathcal{A}, \ham) > 0$). Our proof is constructive, meaning that we provide an explicit protocol for deterministic work extraction for any Hamiltonian $\ham$ with a commensurable spectrum. 

Building upon this result, in Sec.~\ref{sec:spettri_irrazionali} we show that, by approximating a generic Hamiltonian $\ham$ to a $\delta$-close Hamiltonian $\ham^\prime$ with commensurable eigenvalues, we can construct (for a sufficiently large number of copies) a work extraction protocol whose fluctuations can be strictly bounded by an arbitrarily small constant $2\delta$.
In Sec.~\ref{sec:CLT_estimation} we provide another estimation of the asymptotic rate $\mathcal{R}(\mathcal{A}, \ham)$ using the local asymptotic normality of the distribution of energy eigenstates. This is not an upper bound neither a lower bound, but we heuristically expect it to be ``close'' to the actual value of $\mathcal{R}(\mathcal{A}, \ham)$ in most cases.
Conclusions are drawn in Sec.~\ref{sec:conclusions}. The paper also contains a couple of
technical Appendixes. 

\section{Notation} \label{nota-sec}

Consider a quantum system described by a $d$-dimensional Hilbert space ${\cal H}$, 
whose Hamiltonian
\begin{eqnarray}
\hat{H} := \sum_{i=0}^{M-1} \epsilon_{i}  \hat{\Pi}_i \label{defHAM}
\;, \end{eqnarray}  is characterized by
$M$  ($\leq \!\!d$) distinct eigenvalues $\{ \epsilon_{0}, \epsilon_{1}, \cdots, \epsilon_{M-1}\}$ of degeracies $\{ d_0, d_1,\cdots, d_{M-1}\}$, 
$\sum_{i=1}^{M-1} d_i =d$.  In the above expression the operators ${\hat{\Pi}}_0 ,{\hat{\Pi}}_1,\cdots, {\hat{\Pi}}_{M-1} $ form a complete set of orthonormal projectors
($\sum_{i=0}^{M-1}\hat{\Pi}_i=\hat{\openone}$,   $\hat{\Pi}_i \hat{\Pi}_{i'} = \delta_{i,i'}  \hat{\Pi}_i$)
associated  with the  energy eigenspaces 
${\cal H}_0 ,{\cal H}_1,\cdots, {\cal H}_{M-1}$ of $\hat{H}$ (${\cal H}=\oplus_{i=0}^{M-1}{\cal H}_i$, 
$\mbox{dim}{\cal H}_i =d_i$). 
Without loss of generality  we set equal to zero the  ground energy of the model and assume 
the following ordering for the spectral elements of $\hat{H}$, 
\begin{eqnarray} \begin{cases} 
\epsilon_0\;&=\;\; 0\;, \nonumber \\
 \epsilon_j \;&<\;\;  \epsilon_{j+1}\;, \qquad \forall j\in \{ 0,\cdots, M-2\}\;.
 \end{cases} 
  \end{eqnarray}  
  We also define the Linear, Completely Positive, Trace Preserving (LCPTP) channel  
\begin{eqnarray} \Phi(\cdots):= \sum_{i=0}^{M-1}  \hat{\Pi}_i \cdots \hat{\Pi}_i \;, \label{defphi} 
\end{eqnarray} 
 which 
induces full decoherence  with respect to the energy eigenspaces of the system.

    Given hence $\hat{\rho}$ an arbitrary quantum state of the system,
    we define 
    \begin{eqnarray} P(i|\hat{\rho}) := \mbox{Tr} [\hat{\Pi}_i\hat{\rho}]\;, \end{eqnarray}   the population it assigns to the $i$-th energy eigenspace ${\cal H}_i$ and   call  {\it non-zero energy level set} ${\mathbb{S}}[\hat{\rho}]$ 
    the set of  energy levels which  have a non-zero population, i.e. 
    \begin{eqnarray}
    {\mathbb{S}}[\hat{\rho}]: = \{ i: P(i|\hat{\rho})>0\}\;.
    \end{eqnarray} 
The energy diagonal counterpart of $\hat{\rho}$ obtained by the application of the  transformation $\Phi$, 
can be expressed as \begin{eqnarray} \label{diago} 
 \Phi(\hat{\rho}) := 
\sum_{i\in {\mathbb{S}}[\hat{\rho}] }  \hat{\Pi}_i \hat{\rho} \hat{\Pi}_i = \sum_{i\in {\mathbb{S}}[\hat{\rho}] } 
P(i|\hat{\rho})\;  \hat{\rho}_i \;,
\end{eqnarray} 
where for $i\in {\mathbb{S}}[\hat{\rho}]$,  
\begin{eqnarray}
\hat{\rho}_i &:=&\label{defi}  \hat{\Pi}_i \hat{\rho} \hat{\Pi}_i / P(i|\hat{\rho})
=  \sum_{k=0}^{r_i-1}  p_{i,k}  |\epsilon_{i,k}\rangle\langle \epsilon_{i,k}| \;,\end{eqnarray} 
 is the  projected
component of $\hat{\rho}$ on ${\cal H}_i$. In this expression $r_i$ represents the rank of  the matrix $\hat{\rho}_i$,  $p_{i,k}>0$ its 
non-zero eigenvalues, and  
 $|\epsilon_{i,k}\rangle\in {\cal H}_i$ the corresponding eigenvector. 
Notice that by construction one has that 
\begin{eqnarray} 
P(i|\hat{\rho}) = P(i|\Phi(\hat{\rho}))\;, \qquad  {\mathbb{S}}[\hat{\rho}]= {\mathbb{S}}[\Phi(\hat{\rho})]\;,
\end{eqnarray} 
and that 
the  support space of $\Phi(\hat{\rho})$
\begin{eqnarray}\label{nuova1} 
\mbox{Supp}[\Phi(\hat{\rho})] := \{ |\psi\rangle : \Phi(\hat{\rho}) |\psi\rangle \neq 0\}\;,
\end{eqnarray}   is
a proper subset of the direct sum of the energy egienspaces of the model over the elements of 
${\mathbb{S}}[\hat{\rho}]$. More precisely 
we can write 
\begin{eqnarray}\mbox{Supp}[\Phi(\hat{\rho})] &=&  \bigoplus_{i\in\mathbb{S}[\hat{\rho}] }{\cal H}_{i}[\Phi(\hat{\rho})] %
 \;,\label{directsum} \end{eqnarray} 
 where for $i\in\mathbb{S}[\hat{\rho}]$, 
 \begin{eqnarray} \label{defhispan} 
 {\cal H}_{i}[\Phi(\hat{\rho})]:= \mbox{Span}\{  |\epsilon_{i,k}\rangle; k=1,\cdots, r_i \}\subseteq  {\cal H}_{i}\;, \end{eqnarray} 
  represents the $r_i$ dimensional subset of ${\cal H}_i$ where 
 $\Phi(\hat{\rho})$ has no zero population (see Eq.~(\ref{defi})). 
  In case $\hat{H}$ is not degenerate (i.e. when $M=d$) then the inclusion in
 the last (\ref{directsum}) can be replaced by an identity 
implying that   $\mbox{Supp}[\Phi(\hat{\rho})]$  is fully characterized by the non-empty population index subset of  $\hat{\rho}$. 
For degenerate Hamiltonians such correspondence brakes since,  while it still true that   states $\hat{\rho}$ and $\hat{\varrho}$ whose diagonal ensembles have the same support share the same non-empty population index subset,
the opposite implication can be false (i.e. we can have ${\mathbb{S}}[\hat{\rho}]={\mathbb{S}}[\hat{\varrho}]$ but 
$\mbox{Supp}[\Phi(\hat{\rho})] \neq \mbox{Supp}[\Phi(\hat{\varrho})]$).

As it will be clear in the next sections, the support space~(\ref{directsum})  of the diagonal ensemble of a state plays a central role in our analysis. 
For this reason 
 given ${\cal A}$ a (non-empty) linear subset  of ${\cal H}$, we find it convenient to define ${\mathfrak{S}}_{\cal{A}}$  the set of density matrices $\hat{\rho}$ whose energy diagonal ensemble has support that corresponds to such space, i.e
\begin{eqnarray}\label{defsuppo} 
 {\mathfrak{S}}_{\cal{A}} &:=&\{  \hat{\rho}:   \mbox{Supp}[ \Phi(\hat{\rho})]= {{\cal A}}\}\;.
\end{eqnarray} 
By a closed inspection of Eq.~(\ref{directsum}) it turns out that only non-trivial (i.e. not empty) examples of ${\mathfrak{S}}_{\cal{A}}$ are those where ${\cal A}$ is 
a direct sum of a collection $\{ {\cal A}_0, {\cal A}_1,\cdots , {\cal A}_{M-1}\}$ of (possibly empty) linear subsets of the energy eigenspaces 
of the system Hamiltonian 
$\hat{H}$, i.e. 
\begin{eqnarray}
\label{defsuppo1} 
 {\cal A}&:=& \bigoplus_{i=0}^{M-1}{\cal A}_{i}  \;,\qquad 
 {\cal A}_{i}\subseteq {\cal H}_i \;. \end{eqnarray} 
 Notice also that while in general the elements of ${\mathfrak{S}}_{\cal{A}}$ could have different 
spectral decompositions, from Eq.~(\ref{directsum}) it follows that 
given 
\begin{eqnarray} \label{defroho} 
{\mathbb{S}}:= \{ i : \mbox{dim}[{\cal A}_{i}]>0\} \;,
\end{eqnarray} 
the set which identifies  the non-empty elements of $\{ {\cal A}_0, {\cal A}_1,\cdots , {\cal A}_{M-1}\}$,
we must have 
 \begin{eqnarray} \label{ide222}  \forall \hat{\rho}\in {\mathfrak{S}}_{\cal{A}} \Longrightarrow
 \left\{ \begin{array}{l}
 {\mathbb{S}}[\hat{\rho}]= {\mathbb{S}}\;, \\
 {\cal H}_{i}[\Phi(\hat{\rho})]= {\cal A}_i\;,  \qquad
   \forall i\in {\mathbb{S}}\;.
   \end{array} \right.
 \end{eqnarray} 
Special instances of the sums~(\ref{defsuppo1}) 
  are provided by the Hilbert space  itself ${\cal H} := \bigoplus_{i=0}^{M-1}{\cal H}_{i}$
  (in this case ${\mathfrak{S}}_{\cal{H}}$ includes all the states of the model), 
and by the single-state elements ${\cal A}^{[1,j]}:= 
\oplus_{i=1}^{M-1}{\cal A}^{[1,j]}_i$ characterized by the fact that 
their only not-trivial term  is the $j$-th one which corresponds to
a single not-null vector of 
the $j$-th energy eigenspace ${\cal H}_j$, so that the associated non-empty elements set is ${\mathbb{S}}= \{ j\}$ and 
\begin{equation} \label{enerankpure} 
\mbox{dim}[{\cal A}^{[1,j]}_{i}] = \delta_{j,i}\;.
\end{equation} 
Important examples  of density matrices which can be found in ${\mathfrak{S}}_{\cal{A}}$ are represented by the 
Gibbs-like states $\hat{\omega}_{\cal A}(\beta)$ 
obtained by taking a thermal state of inverse temperature~$\beta\geq 0$ and 
 filtering out the energy levels which are not in  ${\cal A}$, i.e.  
\begin{eqnarray}  \label{GIBBSlike} 
\hat{\omega}_{\cal A}(\beta) &: =& \frac{ \hat{\Pi}_{{\cal A}} e^{-\beta \hat{H} }}{Z_{{\cal A}}(\beta)  }  = \frac{ \sum_{i\in{\mathbb{S}}} \hat{\Pi}_{{\cal A}_i} e^{-\beta \epsilon_i}}{Z_{{\cal A}}(\beta)  } \;, 
\end{eqnarray} 
with $\hat{\Pi}_{{\cal A}_i}$ being
the projector on the $i$-th block ${\cal A}_{i}$ of ${\cal A}$  and with
\begin{eqnarray} 
Z_{{\cal A}}(\beta)  := \mbox{Tr}[  \hat{\Pi}_{{\cal A}} e^{-\beta \hat{H}} ]
= \sum_{i\in{\mathbb{S}}} e^{-\beta \epsilon_i} \mbox{Tr}[\hat{\Pi}_{{\cal A}_i}]\;,
\end{eqnarray}
where $\hat{\Pi}_{{\cal A}}= \sum_{i \in {\mathbb{S}}}  \hat{\Pi}_{{\cal A}_i}$ is the the projector on  ${\cal A}$.
We stress that by construction the states~$\hat{\omega}_{\cal A}(\beta)$ are invariant under $\Phi$, i.e.
 \begin{eqnarray} 
 \hat{\omega}_{\cal A}(\beta) = \Phi(\hat{\omega}_{\cal A}(\beta))\;.
 \end{eqnarray} 
 Notice also that in the high temperature limit $\beta=0$ , Eq.~(\ref{GIBBSlike}) reduces to the fully mixed state on ${\cal A}$, i.e. 
 \begin{eqnarray}\label{ddfadasdf} 
 \hat{\omega}_{\cal A}(0) :=\frac{ \hat{\Pi}_{{\cal A}} }{\mbox{Tr}[  \hat{\Pi}_{{\cal A}}] }\;, 
 \end{eqnarray} 
 which is still a proper element of ${\mathfrak{S}}_{\cal{A}}$.
On the contrary 
in the zero-temperature limit $\beta\rightarrow \infty$ of $\hat{\omega}_{\cal A}(\beta)$ 
Eq.~(\ref{GIBBSlike}) converges to a state which typically is not in 
${\mathfrak{S}}_{\cal{A}}$. Indeed  the latter corresponds to the 
 density matrix
\begin{eqnarray} \label{lmi} 
\lim_{\beta \rightarrow \infty} \hat{\omega}_{\cal A}(\beta) = \hat{\omega}_{{\cal A}_{\min}}(0) = \frac{ \hat{\Pi}_{{\cal A}_{\min}} }{\mbox{Tr}[  \hat{\Pi}_{{\cal A}_{\min}}] }\;,
\end{eqnarray} 
which has support on  the  restricted subspace ${\cal A}_{\min}:={\cal A}_{ \min_{i \in {\mathbb{S}}}}$ identified by
 the non-empty block term of  ${\cal A}$ that has 
the smallest energy eingenvalue, i.e.  \begin{eqnarray} 
\epsilon_{\min}({\cal A}) := \label{minimalenergyofA} 
 \min_{i \in {\mathbb{S}}} \epsilon_i =
  \epsilon_{\min_{i \in {\mathbb{S}}}}
  \;.
\end{eqnarray}
We finally introduce a partial ordering on the subspaces~(\ref{defsuppo1}):
\begin{definition}\label{defin0}
Given   two  direct sums
  of linear subsets of the energy eigenspace of the system,
${\cal A}:= \bigoplus_{i=0}^{M-1}{\cal A}_{i}$ and ${\cal A}':= \bigoplus_{i=0}^{M-1}{\cal A}'_{i}$, we
say that the former is {\rm not  dominated} by the latter
(in formulas ${\cal A} {\succeq} {\cal{A}}'$) if 
there exists a  energy preserving unitary mapping $\hat{V}$  that
maps each component of ${\cal A}$ into the corresponding element of ${\cal{A}}'$, i.e. 
\begin{eqnarray} \label{succdef1} 
&&{\cal A} \; {\succeq} \; {\cal{A}}'  \; \Longleftrightarrow \; \exists  \hat{V} \mbox{\rm unitary}, [\hat{H},\hat{V}] =0\;  \mbox{s.t.}  \\\nonumber
&& \qquad  {V}[{\cal A}_i] \subseteq {\cal A}'_i\;,\quad   \forall i\in\{ 0,\cdots, M-1\}  \;,
\end{eqnarray} 
with ${V}[{\cal A}_i]$ representing the image of ${\cal A}_i$ under the action of $\hat{V}$. 
In case the relation can also be inverted (i.e. if we also have ${\cal A}' {\succeq} {\cal{A}}$) we
say that the two sums are {\rm equivalent}  (in formula ${\cal A} \;{\sim}\; {\cal A}'$).
\end{definition}
Clearly a necessary and sufficient condition to have  that ${\cal A}$ is 
not  dominated by ${\cal A}'$ is that the sub-blocks of the former have  dimensions
which are not larger than  the corresponding ones of the latter, \begin{equation} 
{\cal A} \; {\succeq} \; {\cal{A}}'  \; \Longleftrightarrow \; 
\mbox{dim}[{\cal A}_{i}]  \leq  \mbox{dim}[{\cal A}'_{i}] \;, \quad \forall i\in \{0,\cdots, M-1\}\;.
\end{equation}
Similarly a necessary and sufficient condition to ensure that ${\cal A}$ and 
 ${\cal A}'$ are equivalent is instead given by 
 \begin{equation} 
{\cal A} \; {\sim} \; {\cal{A}}'  \; \Longleftrightarrow \; 
\mbox{dim}[{\cal A}_{i}]  =  \mbox{dim}[{\cal A}'_{i}] \;, \quad \forall i\in \{0,\cdots, M-1\}\;. 
\end{equation}
Observe also that  for all not trivial  ${\cal A}$ we can write
\begin{eqnarray} \label{succdef1MAXaqwe} 
{\cal A}^{[1,j]} {\succeq}  {\cal A} \; {\succeq} \; \bar{\cal A}\; 
{\succeq}  \; {\cal{H}} \;, 
\qquad  \forall j\in {\mathbb{S}}\;,
\end{eqnarray} 
where ${\mathbb{S}}$ is the  non-empty elements set of ${\cal A}$, 
 ${\cal A}^{[1,j]}$ is the single state subset defined in Eq.~(\ref{enerankpure}), and finally 
  $\bar{\cal A}$ is  the 
 direct sum obtained by replacing all
 non-empty elements of ${\cal A}$ with the associated energy eigenspaces
 of $\hat{H}$, i.e. 
 \begin{eqnarray} \label{inequimpo111} 
 \bar{\cal{A}}:= \bigoplus_{i=1}^{M-1} \bar{\cal{A}}_i\;, \quad 
\bar{\cal{A}}_i:= \left\{ \begin{array}{ll} 
{\cal{H}}_i & \forall i\in {\mathbb{S}}\;, \\ 
\varnothing & \forall i\notin {\mathbb{S}}\;.
\end{array} \right.
\end{eqnarray}

\section{Deterministic Work Extraction} 
\label{sec:formal_intro}
In the two-point measurement (TPM) formalism~\cite{twopoint1,twopoint2,twopoint3,twopoint4} the work we can extract from the state $\hat{\rho}$ of the system through the application of a unitary transformation $\hat{U}$ 
is determined through the following process.
At time $t_{\rm in}$,  before the application of $\hat{U}$, a projective measurement is performed w.r.t. to the energy projectors $\{ \hat{\Pi}_0,\hat{\Pi}_1,\cdots  \}$: following the formalism introduced in the previous section,  for 
each $i\in{\mathbb{S}}[\hat{\rho}]$ there is a non-zero probability $P(i|\hat{\rho})$ that 
 the system 
will be projected into the density matrix 
$\hat{\rho}_i$ of Eq.~(\ref{defi})
 hence setting  the input energy of the model at $E_{\rm in}=\epsilon_i$. The system is hence evolved through $\hat{U}$ and  
a second energy measurement is performed at time $t_{\rm out}$ obtaining the energy value $E_{\rm out}=\epsilon_j$ with probability 
\begin{eqnarray} P_{\hat{U}}(j| \hat{\rho}_i)&:=& \mbox{Tr} [\hat{\Pi}_j\hat{U}  \hat{\rho}_i\hat{U}^\dag] \nonumber\\ &=& 
\sum_{k=1}^{r_i} p_i^{(k)} \langle \epsilon_{i,k}|\hat{U}^\dag \hat{\Pi}_j \hat{U}|\epsilon_{i,k} \rangle \;. \label{defprel} 
\end{eqnarray}   The extracted work is described by the quantity
\begin{eqnarray}
w = E_{\rm in} - E_{\rm out} \;,
\end{eqnarray} 
which happens to be a random variable
that can take the discrete values $(\epsilon_i-\epsilon_j)$ with probabilities \begin{eqnarray}
P_{\hat{\rho};\hat{U}}(j,i) := P(i| \hat{\rho}) P_{\hat{U}}(j|\hat{\rho}_i) =   \mbox{Tr} [\hat{\Pi}_j \hat{U} \hat{\Pi}_i \hat{\rho}\hat{\Pi}_i\hat{U}^\dag ] 
\;, \label{probabilities_conproiettori}
\end{eqnarray} 
the corresponding distribution being formally described by the formula  
\begin{eqnarray} \label{distribution}
P^{(\hat{H})}_{\hat{\rho};\hat{U}}(w) := \sum_{j,i}  P_{\hat{\rho};\hat{U}}(j,i) \; \delta(w-(\epsilon_i-\epsilon_j))\;. 
\end{eqnarray} 
 It is important to stress that in the TPM protocol the unitary is fixed a priori and cannot be modified
after the acquisition of the first measurement outcome. It is clear that if we do allow for the possibility of adapting the unitary transformation to the measurement outcome we can recover much more energy than we get in the TPM protocol (indeed (at least for models where the Hamiltonian in not degenerate) we can recover the full amount of the energy stored into the system by simply using unitaries $\hat{U}_j$ which maps $|\epsilon_j\rangle$ into the ground state).
However in this way we are basically pumping entropy output of the system, which is equivalent to put the system in thermal contact with a zero-temperature bath.
Notice also that replacing  $\hat{\rho}$ with 
  its energy diagonal part ~(\ref{diago}) 
in the l.h.s. of
 Eq.~(\ref{probabilities_conproiettori}) the quantity  doesn't change (i.e. $P_{\hat{\rho};\hat{U}}(j,i)=P_{\Phi(\hat{\rho});\hat{U}}(j,i)$):
this implies that for what it concerns the work we can extract from the system via  TPM protocols, the states $\hat{\rho}$ and $\Phi(\hat{\rho})$ exhibit the same statistical properties, i.e. 
\begin{eqnarray}
P^{(\hat{H})}_{\hat{\rho};\hat{U}}(w)=P^{(\hat{H})}_{\Phi(\hat{\rho});\hat{U}}(w)\;.\end{eqnarray} 
Our focus is on the first momentum of this distribution, i.e. the quantity 
\begin{eqnarray}
\langle W_{\hat{U}}(\hat{\rho}; \hat{H}) \rangle &:=& 
\int d w P^{(\hat{H})}_{\hat{\rho};\hat{U}}(w) w =  \nonumber 
\sum_{j,i} P_{\hat{\rho};\hat{U}}(j,i)(\epsilon_i-\epsilon_j) \\
 &=& \mbox{Tr}[ \hat{H}\hat{\rho}] - 
\mbox{Tr}[ \hat{H}\hat{U} \Phi(\hat{\rho})\hat{U}^\dag] 
\nonumber\\  &=& \mbox{Tr}[ \hat{H}\Phi(\hat{\rho})] - 
\mbox{Tr}[ \hat{H}\hat{U} \Phi(\hat{\rho})\hat{U}^\dag]\;,   \label{ddfaver} 
\end{eqnarray} 
which represents the mean work we can extract from $\hat{\rho}$ (i.e. $\Phi(\hat{\rho})$)  when employing the unitary $\hat{U}$. 
Its  maximum value corresponds to the ergotropy ${\cal E}(\Phi(\hat{\rho}); \hat{H})$ of $\Phi(\hat{\rho})$~\cite{def_ergo, Alicki2013}, i.e. 
\begin{eqnarray} \label{upperbd} 
\langle W_{\max}(\hat{\rho}; \hat{H}) \rangle: = \max_{\hat{U}} \langle W_ {\hat{U}}(\hat{\rho}; \hat{H}) \rangle = {\cal E}(\Phi(\hat{\rho}); \hat{H})\;,
\end{eqnarray} 
with the optimal $\hat{U}$ which saturates the maximum being the transformation $\hat{U}_{\star}$ which transforms $\Phi(\hat{\rho})$ into its passive counterpart 
$\Phi(\hat{\rho})^{\downarrow}$~\cite{Pusz1978, Lenard1978}
(by the same token the  minimum of $\langle W_ {\hat{U}}(\hat{\rho}; \hat{H}) \rangle$ corresponds to  the anti-ergotropy  
${\cal E}_A(\Phi(\hat{\rho}); \hat{H})$ of the diagonal ensemble state). 
Notice that since the passive state energy is a Schur-concave functional~\cite{Alimuddin2020} it follows that 
${\cal E}(\Phi(\hat{\rho}); \hat{H})$  is always not larger than ${\cal E}(\hat{\rho};\hat{H})$, so that
\begin{eqnarray}
\langle W_ {\hat{U}}(\hat{\rho}; \hat{H}) \rangle  \leq {\cal E}(\Phi(\hat{\rho}); \hat{H})\leq {\cal E}(\hat{\rho}; \hat{H})\;, \qquad \forall \hat{U}\;,
\end{eqnarray} 
meaning that the TPM process  is less efficient than the ergotropy protocol in extracting energy form the state. 
We shall also consider  the variance of the extracted work, i.e. the quantity 
\begin{eqnarray}
\langle \Delta^2 W_ {\hat{U}}(\hat{\rho}; \hat{H}) \rangle &=& 
\int d w P^{(\hat{H})}_{\hat{\rho};\hat{U}}(w) \left[ w-\langle W_ {\hat{U}}(\hat{\rho}; \hat{H}) \rangle\right]^2 \nonumber  \\ 
 &=& \sum_{j,i} 
P_{\hat{\rho};\hat{U}}(j,i)\Big[(\epsilon_i-\epsilon_j)-\langle W_ {\hat{U}}(\hat{\rho}; \hat{H}) \rangle\Big]
^2\nonumber \\
&=& \langle W^2_{\hat{U}}(\hat{\rho};\hat{H}) \rangle -\langle W_ {\hat{U}}(\hat{\rho}; \hat{H}) \rangle^2\;. 
\end{eqnarray} 
Fist notice that for each system there exists always at least a choice of $\hat{U}$ such that 
$\langle \Delta^2 W_ {\hat{U}}(\hat{\rho}; \hat{H}) \rangle=0$ (for instance  $\hat{U}=\hat{I}$). If the associated mean value $W= \langle W_ {\hat{U}}(\hat{\rho}; \hat{H}) \rangle$ is  non negative (of course this not the case of 
$\hat{U}=\hat{I}$), we  say that 
for such unitaries the TPM protocol allows one to extract the work $W$ deterministically, i.e. with zero-fluctuations:
\begin{definition}\label{defin1}
A work value $W\geq 0$ is said to be deterministically extractable from  the state $\hat{\rho}$ of the system if there exists $\hat{U}$ unitary such that 
\begin{eqnarray}
P^{(\hat{H})}_{\hat{\rho};\hat{U}}(w)=\delta(w -W)\;,\end{eqnarray} 
or, equivalently,  if and only if 
 \begin{eqnarray}  \left\{ \begin{array}{l}  \langle  W_ {\hat{U}}(\hat{\rho}; \hat{H}) \rangle=W 
 \;,  \\ \langle \Delta^2 W_ {\hat{U}}(\hat{\rho}; \hat{H}) \rangle =0\;. 
 \end{array} \right. \label{defDETWORK} 
  \end{eqnarray} 
\end{definition} 
By looking carefully at the definitions we have introduced so far, it is clear the only possible values $W$  that fulfil Eq.~(\ref{defDETWORK}) 
 are those associated with the non-negative energy gaps of the spectrum of $\hat{H}$. 
More specifically we
can claim that 
a certain value of work $W\geq 0$ can be extracted deterministically from $\hat{\rho}$   if and only if there 
exists a mapping $\mu :  {\mathbb{S}}[\hat{\rho}] \mapsto \{ 0, 1,\cdots, M-1\}$ and a unitary evolution ${\hat{U}}$ such that 
\begin{eqnarray}  \forall i\in {\mathbb{S}}[\hat{\rho}] \qquad  
\left\{  \begin{array}{ll} \epsilon_i-\epsilon_{\mu(i)}&=W \;, \\ \\ 
P_{\hat{U}}(\mu(i)| \hat{\rho}_i) &=1\;.   \end{array} \right.   \label{def20} 
\end{eqnarray} 
 Furthermore invoking Eq.~(\ref{defprel}) we can recast the second condition in Eq.~(\ref{def20}) as
 \begin{eqnarray} \label{importantIMPP} 
 &&\sum_{k=1}^{r_i} p_i^{(k)} \langle \epsilon_{i,k}|\hat{U}^\dag \hat{\Pi}_{\mu(i)}  \hat{U}|\epsilon_{i,k} \rangle= 1  \\
 && \qquad \Longleftrightarrow
 \langle \epsilon_{i,k}|\hat{U}^\dag \hat{\Pi}_{\mu(i)}  \hat{U}|\epsilon_{i,k} \rangle= 1 \quad \forall k\in \{ 1,\cdots, r_i\}\;, \nonumber 
 \end{eqnarray} 
 where the second line follows from the fact  that the probabilities $p_i^{(k)}$ are all strictly positive.
Observe that the resulting expression is equivalent to say that the energy subspace ${\cal H}_{\mu(i)}$ must be sufficiently large to contain the full image of  the set ${\cal H}_i[\Phi(\hat{\rho})]$ defined in Eq.~(\ref{defhispan}).
We can hence equivalently write Eq.~(\ref{def20}) by saying
 that $W\geq 0$ can be extracted deterministically from $\hat{\rho}$   if and only if there 
exists a mapping $\mu :  {\mathbb{S}}[\hat{\rho}] \mapsto \{ 0, 1,\cdots, M-1\}$ and a unitary evolution ${\hat{U}}$ such that 
\begin{eqnarray}  \forall i\in {\mathbb{S}}[\hat{\rho}] \qquad  
\left\{  \begin{array}{ll} \epsilon_i-\epsilon_{\mu(i)}=W \;, \\ \\ 
 {U}\big[{\cal H}_i[\Phi(\hat{\rho})]\big] = {\cal H}_{\mu(i)}\;,   \end{array} \right.   \label{def243} 
\end{eqnarray}
with ${U}\big[{\cal H}_i[\Phi(\hat{\rho})]\big]$ being the image of ${\cal H}_i[\Phi(\hat{\rho})]$ under the action of $\hat{U}$. 
The above  expression can now be used to 
 establish the following general rules: 
\begin{lemma}\label{lemma111} Let 
  $\hat{U}$ be a unitary transformation
  which allows for the deterministic extraction of 
a work value $W\geq 0$  from the state $\hat{\rho}$.
Then such unitary
 will lead the
same outcome when applied to  any other  density matrix 
$\hat{\varrho}$ whose diagonal ensemble $\Phi(\hat{\varrho})$ has the same support
 of $\Phi(\hat{\rho})$, i.e. 
 \begin{eqnarray}\mbox{\rm Supp}[\Phi(\hat{\varrho})] =\mbox{\rm Supp}[\Phi(\hat{\rho})]
 \Longrightarrow
 \left\{ \begin{array}{l}  \langle  W_ {\hat{U}}(\hat{\varrho}; \hat{H}) \rangle=W 
 \;,  \\ \langle \Delta^2 W_ {\hat{U}}(\hat{\varrho}; \hat{H}) \rangle =0\;. 
 \end{array} \right.
  \end{eqnarray} 
\end{lemma}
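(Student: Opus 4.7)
The plan is to prove the lemma by invoking the characterization of deterministic work extraction given in Eq.~(\ref{def243}), which encodes the property purely in terms of the mapping $\mu$ on energy labels and the action of $\hat{U}$ on the subspaces ${\cal H}_i[\Phi(\hat{\rho})]$. Since the hypothesis fixes exactly the union of these subspaces across $i$, the same $(\mu,\hat{U})$ will work verbatim for $\hat{\varrho}$.

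First, I would unpack the hypothesis $\mbox{Supp}[\Phi(\hat{\varrho})]=\mbox{Supp}[\Phi(\hat{\rho})]$ using the block decomposition~(\ref{directsum}). Because $\Phi(\hat{\rho})$ and $\Phi(\hat{\varrho})$ are both block-diagonal with respect to the orthogonal family $\{\hat{\Pi}_i\}_{i=0}^{M-1}$, their supports decompose as
\begin{eqnarray}
\mbox{Supp}[\Phi(\hat{\rho})] &=& \bigoplus_{i\in{\mathbb{S}}[\hat{\rho}]} {\cal H}_i[\Phi(\hat{\rho})],\nonumber\\
\mbox{Supp}[\Phi(\hat{\varrho})] &=& \bigoplus_{i\in{\mathbb{S}}[\hat{\varrho}]} {\cal H}_i[\Phi(\hat{\varrho})],\nonumber
\end{eqnarray}
with each block lying inside the corresponding energy eigenspace ${\cal H}_i$. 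By uniqueness of the decomposition of a subspace of ${\cal H}=\bigoplus_i{\cal H}_i$ into its intersections with the ${\cal H}_i$, equality of the overall supports forces ${\mathbb{S}}[\hat{\varrho}]={\mathbb{S}}[\hat{\rho}]=:{\mathbb{S}}$ and ${\cal H}_i[\Phi(\hat{\varrho})]={\cal H}_i[\Phi(\hat{\rho})]$ for every $i\in{\mathbb{S}}$.

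Next, I would use the assumption that $\hat{U}$ extracts $W$ deterministically from $\hat{\rho}$ to invoke Eq.~(\ref{def243}): there exists $\mu:{\mathbb{S}}\to\{0,\dots,M-1\}$ such that $\epsilon_i-\epsilon_{\mu(i)}=W$ and ${U}[{\cal H}_i[\Phi(\hat{\rho})]]={\cal H}_{\mu(i)}$ (or a subspace thereof, by Eq.~(\ref{importantIMPP})) for every $i\in{\mathbb{S}}$. Substituting the identifications established in the previous step, precisely the same conditions hold with $\hat{\rho}$ replaced by $\hat{\varrho}$. Applying Eq.~(\ref{def243}) in the opposite direction then yields the desired conclusion that $\hat{U}$ extracts the same deterministic value $W$ from $\hat{\varrho}$, which by Definition~\ref{defin1} is equivalent to $\langle W_{\hat{U}}(\hat{\varrho};\hat{H})\rangle=W$ and $\langle\Delta^2 W_{\hat{U}}(\hat{\varrho};\hat{H})\rangle=0$.

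There is essentially no hard step here; the main thing to be careful about is the logical equivalence between the support-level condition and the block-wise conditions ${\cal H}_i[\Phi(\hat{\varrho})]={\cal H}_i[\Phi(\hat{\rho})]$, which is a direct consequence of the fact that both diagonal ensembles live in the fixed orthogonal decomposition induced by $\hat{H}$. Once that identification is made explicit, the criterion~(\ref{def243}) is manifestly blind to the specific eigenvalues $p_{i,k}$ of $\Phi(\hat{\rho})$ and sees only the subspaces ${\cal H}_i[\Phi(\hat{\rho})]$, which is exactly the content the lemma is isolating.
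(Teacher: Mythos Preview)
Your proposal is correct and follows essentially the same approach as the paper's proof: from $\mbox{Supp}[\Phi(\hat{\varrho})]=\mbox{Supp}[\Phi(\hat{\rho})]$ one deduces ${\mathbb{S}}[\hat{\varrho}]={\mathbb{S}}[\hat{\rho}]$ and ${\cal H}_i[\Phi(\hat{\varrho})]={\cal H}_i[\Phi(\hat{\rho})]$, and then transfers the criterion~(\ref{def243}) verbatim from $\hat{\rho}$ to $\hat{\varrho}$. Your version is slightly more explicit about why equality of supports forces equality of the individual energy blocks, but the underlying argument is the same.
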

\begin{proof} Since  the diagonal ensembles $\Phi(\hat{\varrho})$ and
$\Phi(\hat{\rho})$
have the same support it follows that 
 $\mathbb{S}[\hat{\varrho}]=\mathbb{S}[\hat{\rho}]$ and 
 ${\cal H}_{i}[\Phi(\hat{\varrho})] = {\cal H}_{i}[\Phi(\hat{\rho})]$ for all $i\in 
 \mathbb{S}[\hat{\rho}]$. Accordingly if the condition~(\ref{def243}) applies
 to $\hat{\rho}$ then it also applies to $\hat{\varrho}$.
 \end{proof}
It is worth stressing that Lemma~\ref{lemma111} does not requires $\Phi(\hat{\rho})$ and $\Phi(\hat{\varrho})$ to have the same spectrum: it only matters that 
they have the same support.

\begin{lemma}\label{lemma111222} Let $\hat{\rho}$ and $\hat{\rho}'$ be
two density matrices such that the support of 
  $\Phi(\hat{\rho}')$ can be mapped into the support of $\Phi(\hat{\rho})$ 
 via an energy preserving unitary operation. Then for each
  $\hat{U}$ unitary transformation
  which allows for the deterministic extraction of 
a work value $W\geq 0$  from the state $\hat{\rho}$, there exists
a new unitary $\hat{U}'$ which allows to do the same 
from $\hat{\rho}'$.
\end{lemma}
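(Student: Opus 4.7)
The plan is to reduce the statement to the characterization of deterministic work extractability given in Eq.~(\ref{def243}), and simply compose $\hat{U}$ with the energy preserving unitary $\hat{V}$ that witnesses the mapping between the supports.

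First, I would unpack the hypothesis. An energy preserving unitary $\hat{V}$ commutes with $\hat{H}$, so it decomposes blockwise as $\hat{V} = \bigoplus_{i} \hat{V}_i$ with each $\hat{V}_i$ a unitary on the eigenspace ${\cal H}_i$. The assumption that $\hat{V}$ maps $\mbox{Supp}[\Phi(\hat{\rho}')]$ into $\mbox{Supp}[\Phi(\hat{\rho})]$ then decomposes, by Eq.~(\ref{directsum}), into the blockwise inclusions $V_i[{\cal H}_i[\Phi(\hat{\rho}')]] \subseteq {\cal H}_i[\Phi(\hat{\rho})]$ for each $i$. In particular, whenever $i\in{\mathbb{S}}[\hat{\rho}']$ the left-hand side is a non-trivial subspace, forcing ${\cal H}_i[\Phi(\hat{\rho})]$ to be non-trivial as well; hence ${\mathbb{S}}[\hat{\rho}'] \subseteq {\mathbb{S}}[\hat{\rho}]$.

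Next, since by hypothesis $\hat{U}$ deterministically extracts $W$ from $\hat{\rho}$, the characterization~(\ref{def243}) gives a map $\mu:{\mathbb{S}}[\hat{\rho}]\to\{0,\dots,M-1\}$ with $\epsilon_i-\epsilon_{\mu(i)}=W$ and $U[{\cal H}_i[\Phi(\hat{\rho})]]\subseteq{\cal H}_{\mu(i)}$ for every $i\in{\mathbb{S}}[\hat{\rho}]$. I would then propose the candidate
\begin{eqnarray}
\hat{U}' := \hat{U}\,\hat{V}\;,
\end{eqnarray}
and verify~(\ref{def243}) for $\hat{\rho}'$ using the restriction $\mu':=\mu\vert_{{\mathbb{S}}[\hat{\rho}']}$, which is well-defined by the inclusion established above. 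For each $i\in{\mathbb{S}}[\hat{\rho}']$, the energy-gap condition $\epsilon_i-\epsilon_{\mu'(i)}=W$ is inherited from that of $\mu$, while the image condition follows by a direct chain,
\begin{eqnarray}
U'\bigl[{\cal H}_i[\Phi(\hat{\rho}')]\bigr]
= U\bigl[V_i[{\cal H}_i[\Phi(\hat{\rho}')]]\bigr]
\subseteq U\bigl[{\cal H}_i[\Phi(\hat{\rho})]\bigr]
\subseteq {\cal H}_{\mu'(i)}\;.
\end{eqnarray}
This shows that $\hat{U}'$ satisfies the characterization in Eq.~(\ref{def243}) for the state $\hat{\rho}'$ with the same work value $W$, concluding the argument.

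I do not expect a real obstacle here; the lemma is essentially a compositional consequence of Eq.~(\ref{def243}) and Lemma~\ref{lemma111}. The only point requiring a little care is the direction of the inclusion between the ${\mathbb{S}}$ sets and between the block subspaces, which must be tracked so that the restricted map $\mu'$ is indeed defined on all of ${\mathbb{S}}[\hat{\rho}']$. Once this is noted, the prescription $\hat{U}'=\hat{U}\hat{V}$ is forced and the verification is immediate.
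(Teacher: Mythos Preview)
Your proposal is correct and follows essentially the same approach as the paper: take the energy-preserving unitary $\hat V$ witnessing the support inclusion and set $\hat U' := \hat U\hat V$, then verify the characterization~(\ref{def243}) for $\hat\rho'$. Your write-up is in fact more careful than the paper's own proof, making explicit the blockwise decomposition of $\hat V$, the inclusion ${\mathbb S}[\hat\rho']\subseteq{\mathbb S}[\hat\rho]$, and the use of $\subseteq$ rather than $=$ in the image condition.
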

\begin{proof} Let $\hat{V}$ be the energy preserving unitary transformation 
that sends $\mbox{\rm Supp}[\Phi(\hat{\rho}')]$ into $\mbox{\rm Supp}[\Phi(\hat{\rho})]$. Recalling~(\ref{directsum}) this implies that 
for all $i \in \mathbb{S}[\hat{\rho}']$ we must have 
\begin{eqnarray}
{V}\big[{\cal H}_{i}[\Phi(\hat{\rho}')] \big] = {\cal H}_{i}[\Phi(\hat{\rho})] 
 \;,\label{directsum11}  \end{eqnarray} 
 where as usual we used ${V}\big[{\cal H}_{i}[\Phi(\hat{\rho}')] \big]$ to indicate the image
 of ${\cal H}_{i}[\Phi(\hat{\rho}')]$ under $\hat{V}$. 
The thesis hence follows by observing that if $\hat{U}$ fulfils the 
deterministic work extraction condition~(\ref{def243}) for $\hat{\rho}$, then the unitary 
 $\hat{U}' := \hat{U}\hat{V}$ does the same for $\hat{\rho}'$.
 \end{proof}
In the remaining of the present paper we shall focus on the characterization of the maximum work that can be deterministically extracted from 
a given input state:
\begin{definition}\label{defin2}
The Maximum Deterministic Extractable Work (or MDEW in brief) of a state 
 $\hat{\rho}$ is the maximum value of the values $W$ 
 which fulfil the  condition~(\ref{defDETWORK}), i.e. the quantity 
\begin{equation} \label{defmdew} 
W_{\max}^{(\rm det)}(\hat{\rho};\hat{H}) := \max_{\hat{U}} \{ \langle  W_ {\hat{U}}(\hat{\rho}; \hat{H}) \rangle  : \langle \Delta^2 W_ {\hat{U}}(\hat{\rho}; \hat{H}) \rangle =0\}\;.  
\end{equation} \end{definition} 
Clearly the configurations for which one expects  MDEW to be strictly positive correspond to {\it rare events}: this is a consequence of the fact that even the smallest perturbation in the spectrum of $\hat{H}$ or in the support of $\hat{\rho}$  will tend to assign a positive value to the TPM work variance functional $\langle \Delta^2 W_ {\hat{U}}(\hat{\rho}; \hat{H}) \rangle$ (for instance in the case of the  example of Eq.~(\ref{ide1}) discussed below, it is sufficient to take 
 $\epsilon_1=E$, $\epsilon_2=2E (1+\delta)$ with $\delta> 0$, or to add a small but non-zero population to the ground state of $\hat{\rho}$,  to get 
 $W_{\max}^{(\rm det)}(\hat{\rho};\hat{H})=0$). Nonetheless the study of $W_{\max}^{(\rm det)}(\hat{\rho};\hat{H})$ can 
 give us some hint on the efficiency of work extraction procedures in many cases of practical interests where
 geometrical or symmetry properties bound the system to assume assigned spectral characteristic. 
\section{Preliminary observations} \label{sec:PRELIMINARIES}

 It  also goes without mentioning that $W_{\max}^{(\rm det)}(\hat{\rho};\hat{H})$ coincides with 
$W_{\max}^{(\rm det)}(\Phi(\hat{\rho});\hat{H})$ and that,  thanks to~(\ref{upperbd}), it is  upper bounded by ${\cal E}(\Phi(\hat{\rho}); \hat{H})$, i.e.
\begin{eqnarray} \label{up1} 
W_{\max}^{(\rm det)}(\hat{\rho};\hat{H})= W_{\max}^{(\rm det)}(\Phi(\hat{\rho});\hat{H}) \leq {\cal E}(\Phi(\hat{\rho}); \hat{H})\;. 
\end{eqnarray} 
As a direct consequence of this fact, it follows that  if $\Phi(\hat{\rho})$ is a passive state, then 
$\langle W_{\max}(\hat{\rho}; \hat{H}) \rangle=0$ with the optimal unitary $\hat{U}_{\star}$ being the identity operator; accordingly we have
that the maximum of the deterministic work of these states is simply~$0$, i.e. 
\begin{eqnarray}\label{passiveMDEW} 
W_{\max}^{(\rm det)}(\hat{\rho};\hat{H})=0 \;, \qquad  \mbox{$\forall \Phi(\hat{\rho})$ passive.}
\end{eqnarray} 
Another case in which the MDEW can be easily computed is when 
 $\Phi(\hat{\rho})$ is pure, i.e. when  such state, and hence $\hat{\rho}$,  corresponds to the an eigenvector of $\hat{H}$: under this circumstance 
the maximum deterministic work we can get corresponds to the ergotropy which incidentally corresponds to the mean energy  of the  state, i.e. 
\begin{eqnarray}\label{ide1} 
W_{\max}^{(\rm det)}(\hat{\rho};\hat{H})={\Tr}[ \hat{H} \hat{\rho}]  \;, \qquad  \mbox{$\forall \Phi(\hat{\rho})$ pure.  }
\end{eqnarray} 
A less trivial example is provided by the following configuration:
let $\hat{H}= \epsilon_2 |\epsilon_2\rangle\langle \epsilon_2| +  \epsilon_1 |\epsilon_1\rangle\langle \epsilon_1|$ be a non-degenerate, three level Hamiltonian with uniforms energy gaps, i.e.  $\epsilon_1=E$, $\epsilon_2=2E$.
For any rank-2 density matrix $\Phi(\hat{\rho})$ with support space $\mbox{Span}\{ 
|\epsilon_1\rangle, |\epsilon_2\rangle\}$ we can then write 
\begin{eqnarray}\label{ide1bis} 
W_{\max}^{(\rm det)}(\hat{\rho};\hat{H})=
E \;,
\end{eqnarray} 
(the same holds if the matrix has rank-1 with non-zero population on $|\epsilon_1\rangle$, while if it has rank-1 but non-zero population on $|\epsilon_2 \rangle$ we get 
$W_{\max}^{(\rm det)}(\hat{\rho};\hat{H})=2E $). 
To see this observe  using the unitary $\hat{U}:= |\epsilon_1\rangle\langle \epsilon_2| +|\epsilon_0\rangle\langle \epsilon_1| + |\epsilon_2\rangle\langle \epsilon_0|$ 
we can induce the transitions  $|\epsilon_1\rangle\mapsto |\epsilon_0\rangle$ and 
$|\epsilon_2\rangle \mapsto |\epsilon_1\rangle$  which both yield exactly the work value $\Delta$.
To get more than this one would need necessarily to couple $|\epsilon_2\rangle$ with $|\epsilon_0\rangle$:  
 such amount of work however cannot be matched by any transitions that involves $|\epsilon_1\rangle$ as input state. As a result these type of operations will  involve random outcomes leading to 
non zero values of $\langle \Delta^2 W_ {\hat{U}}(\hat{\rho}; \hat{H}) \rangle$. Notice finally that, as a consequence of~Lemma~\ref{lemma111}, 
 (\ref{ide1}) holds true irrespectively from the specific values of the populations 
of the level $|\epsilon_2\rangle$ and $|\epsilon_1\rangle$.
This is a general rule that recalling the definitions of $\cal A$ and ${\mathfrak{S}}_{\cal{A}}$ 
introduced in Sec.~\ref{nota-sec} 
 can be summarized as follows:
\begin{corol}\label{cor1}   
All inputs states $\hat{\rho}$ of the set 
 ${\mathfrak{S}}_{\cal{A}}$ share the same MDEW value, i.e. 
 \begin{eqnarray}
 W_{\max}^{(\rm det)}(\hat{\rho};\hat{H})  =  {W}_{\max}^{(\rm det)} ({\cal A};\hat{H})\;, \quad \forall \hat{\rho}\in  \label{spe}
{\mathfrak{S}}_{\cal{A}}\;,
\end{eqnarray} 
where  recalling that  $\hat{\omega}_{{\cal A}}(0)$ of Eq.~(\ref{ddfadasdf}) belongs to ${\mathfrak{S}}_{\cal{A}}$ we can identify the constant 
${W}_{\max}^{(\rm det)} ({\cal A};\hat{H})$ as 
  \begin{eqnarray}
{W}_{\max}^{(\rm det)} ({\cal A};\hat{H}) :=  {W}_{\max}^{(\rm det)} (\hat{\omega}_{\cal A}(0);\hat{H})\;.   \label{spe222}
\end{eqnarray} 
Furthermore, irrespectively from the selected input state, such optimal value can be obtained using the same optimal unitary transformation $\hat{U}_{\star}$,  i.e. 
\begin{eqnarray} 
 \left\{ \begin{array}{l}  \langle  W_ {\hat{U}_{\star}}(\hat{\rho}; \hat{H}) \rangle=
 {W}_{\max}^{(\rm det)} ({\cal A};\hat{H}) 
 \;,  \\ \langle \Delta^2 W_ {\hat{U}_{\star}}(\hat{\rho}; \hat{H}) \rangle =0\;,
 \end{array} \right. \quad  \forall \hat{\rho}\in  \label{newddsds}
{\mathfrak{S}}_{\cal{A}}\;.
 \end{eqnarray} 
\end{corol}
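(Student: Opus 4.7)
The plan is to recognize that the corollary is essentially a direct repackaging of Lemma~\ref{lemma111}. The defining property of the set $\mathfrak{S}_{\cal A}$ is that all its elements have the same diagonal-ensemble support, namely $\mbox{Supp}[\Phi(\hat{\rho})]={\cal A}$, and Lemma~\ref{lemma111} already asserts that deterministic extractability depends solely on this support. So all the work will be in chaining together the definitions correctly, rather than in any new argument.

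First I would verify that the constant ${W}_{\max}^{(\rm det)}({\cal A};\hat{H})$ defined through Eq.~(\ref{spe222}) is well-posed, i.e.\ that the reference state $\hat{\omega}_{\cal A}(0)$ is itself an element of $\mathfrak{S}_{\cal A}$. This is immediate: by construction $\hat{\omega}_{\cal A}(0)=\hat{\Pi}_{\cal A}/\mbox{Tr}[\hat{\Pi}_{\cal A}]$ is invariant under $\Phi$ and its support is precisely ${\cal A}=\bigoplus_{i\in\mathbb{S}}{\cal A}_i$, so $\mbox{Supp}[\Phi(\hat{\omega}_{\cal A}(0))]={\cal A}$, which is exactly the condition defining $\mathfrak{S}_{\cal A}$ in Eq.~(\ref{defsuppo}).

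Next, for the equality~(\ref{spe}), I would pick an arbitrary $\hat{\rho}\in\mathfrak{S}_{\cal A}$ and compare it with $\hat{\omega}_{\cal A}(0)$. Both have $\Phi$-support equal to ${\cal A}$, so Lemma~\ref{lemma111} applies in both directions: any unitary $\hat{U}$ that deterministically extracts a work value $W\ge 0$ from $\hat{\omega}_{\cal A}(0)$ does the same from $\hat{\rho}$, and conversely. In particular the two sets $\{W\ge 0 : W \text{ is deterministically extractable}\}$ associated with $\hat{\rho}$ and with $\hat{\omega}_{\cal A}(0)$ coincide, so their suprema—which are exactly $W_{\max}^{(\rm det)}(\hat{\rho};\hat{H})$ and $W_{\max}^{(\rm det)}(\hat{\omega}_{\cal A}(0);\hat{H})$ by Definition~\ref{defin2}—are equal. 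This yields~(\ref{spe}) with the identification~(\ref{spe222}).

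Finally, for the universal-unitary statement~(\ref{newddsds}), I would take $\hat{U}_\star$ to be any unitary attaining the maximum in the definition of ${W}_{\max}^{(\rm det)}({\cal A};\hat{H})$ when applied to the representative $\hat{\omega}_{\cal A}(0)$. Applying Lemma~\ref{lemma111} once more to transport this $\hat{U}_\star$ from $\hat{\omega}_{\cal A}(0)$ to an arbitrary $\hat{\rho}\in\mathfrak{S}_{\cal A}$ produces exactly the two conditions in~(\ref{newddsds}). I do not anticipate any real obstacle: the only point requiring care is checking that the membership $\hat{\omega}_{\cal A}(0)\in\mathfrak{S}_{\cal A}$ indeed uses the support of $\Phi(\hat{\omega}_{\cal A}(0))$ rather than of $\hat{\omega}_{\cal A}(0)$ itself, but since $\hat{\omega}_{\cal A}(0)$ is already $\Phi$-invariant these two notions coincide.
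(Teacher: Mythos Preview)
Your proposal is correct and follows exactly the same approach as the paper, which simply invokes Lemma~\ref{lemma111} together with the fact that all elements of $\mathfrak{S}_{\cal A}$ share the common support space~${\cal A}$. Your write-up merely spells out the details that the paper leaves implicit.
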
 
\begin{proof} Use Lemma~\ref{lemma111}  and  the fact that 
the elements of ${\mathfrak{S}}_{\cal{A}}$ share the same support 
space ${\cal A}$. 
\end{proof} 
In a similar way it follows that:  
\begin{corol}\label{cor2}
Let  ${\cal A}$ and ${\cal{A}}'$ be two equivalent (non-trivial) 
direct sums of linear subset of the energy eigenspaces of the system. Then 
the MDEW values ${W}_{\max}^{(\rm det)} ({\cal A};\hat{H})$ and 
 ${W}_{\max}^{(\rm det)} ({{\cal A}'};\hat{H})$  associated with  the
 states of the sets  ${\mathfrak{S}}_{\cal{A}}$ and ${\mathfrak{S}}_{\cal{A}'}$
 coincide, i.e. 
\begin{equation} 
{\cal A} \;{\sim}\; {\cal{A}}'  \; \Longrightarrow \; 
 {W}_{\max}^{(\rm det)} ({\cal A};\hat{H})=  {W}_{\max}^{(\rm det)} ({{\cal A}'};\hat{H})\;.
\end{equation} 
\end{corol}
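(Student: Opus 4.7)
The plan is to reduce Corollary~\ref{cor2} to a direct combination of Corollary~\ref{cor1} and Lemma~\ref{lemma111222}. By Corollary~\ref{cor1}, the MDEW value associated with a subspace depends only on that subspace (i.e.\ is well defined via any representative of $\mathfrak{S}_{\cal A}$), so it suffices to exhibit representatives $\hat{\rho}\in \mathfrak{S}_{\cal A}$ and $\hat{\rho}'\in \mathfrak{S}_{{\cal A}'}$ and show that their MDEWs coincide.

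First, I would unpack the hypothesis ${\cal A} \sim {\cal A}'$ using Definition~\ref{defin0}: by equivalence we have both ${\cal A}\succeq {\cal A}'$ and ${\cal A}'\succeq {\cal A}$, so there exist energy-preserving unitaries $\hat{V}$ and $\hat{V}'$ (both commuting with $\hat{H}$) that map the blocks ${\cal A}_i$ bijectively onto ${\cal A}'_i$ and back, for every $i$. Pick concrete representatives, e.g.\ the Gibbs-like states $\hat{\rho}:=\hat{\omega}_{\cal A}(0)\in \mathfrak{S}_{\cal A}$ and $\hat{\rho}':=\hat{\omega}_{{\cal A}'}(0)\in \mathfrak{S}_{{\cal A}'}$ from Eq.~(\ref{ddfadasdf}); by construction $\mathrm{Supp}[\Phi(\hat{\rho})] = {\cal A}$ and $\mathrm{Supp}[\Phi(\hat{\rho}')] = {\cal A}'$, and $\hat V$, $\hat V'$ map these supports into each other.

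Next I would apply Lemma~\ref{lemma111222} twice. In one direction, since an energy-preserving unitary maps $\mathrm{Supp}[\Phi(\hat{\rho}')]$ onto $\mathrm{Supp}[\Phi(\hat{\rho})]$, every unitary $\hat{U}$ that extracts a work value $W\geq 0$ deterministically from $\hat{\rho}$ yields a unitary $\hat{U}':=\hat{U}\hat{V}$ extracting the same $W$ deterministically from $\hat{\rho}'$; this gives $W_{\max}^{(\rm det)}({\cal A};\hat{H}) \leq W_{\max}^{(\rm det)}({\cal A}';\hat{H})$. In the other direction, using $\hat V'$ instead, we obtain the reverse inequality $W_{\max}^{(\rm det)}({\cal A}';\hat{H}) \leq W_{\max}^{(\rm det)}({\cal A};\hat{H})$. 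Taking the two together yields the desired equality.

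There is essentially no hard part: once Corollary~\ref{cor1} has promoted MDEW to a function of the subspace and Lemma~\ref{lemma111222} has recorded the right transport property under energy-preserving unitaries, the proof is purely a bookkeeping argument. The only point requiring minor care is to verify that the representatives chosen have the nominal support (which is immediate from Eq.~(\ref{GIBBSlike}) at $\beta=0$) and that $\hat V,\hat V'$ indeed map one support onto the other, so that Lemma~\ref{lemma111222} applies symmetrically in both directions.
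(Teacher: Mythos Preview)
Your proposal is correct and follows essentially the same route as the paper's own proof, which likewise combines Definition~\ref{defin0} (to obtain energy-preserving unitaries in both directions) with Lemma~\ref{lemma111222} (to transport deterministic extraction from one support to the other and back). One minor bookkeeping slip: in the first direction the energy-preserving unitary sending $\mathrm{Supp}[\Phi(\hat\rho')]={\cal A}'$ into $\mathrm{Supp}[\Phi(\hat\rho)]={\cal A}$ is your $\hat V'$ (not $\hat V$), so the composed unitary should read $\hat U':=\hat U\,\hat V'$.
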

\begin{proof} Use Lemma~\ref{lemma111222} and 
 the fact that according to Definition~\ref{defin0}
 the support spaces ${\cal A}$ and ${\cal A}'$ of 
the density matrices  of ${\mathfrak{S}}_{\cal{A}}$ and ${\mathfrak{S}}_{\cal{A}'}$ are connected by energy preserving unitary transformations that maps the first into the second and vice-versa.
 \end{proof}
The above results imply that, a part from the energy eigenvalues of $\hat{H}$,  the  MDEW value 
${W}_{\max}^{(\rm det)} ({\cal A};\hat{H})$ can  only depend upon 
the dimensions of the sub-blocks of ${\cal A}$. Accordingly we can always express 
${W}_{\max}^{(\rm det)} ({\cal A};\hat{H})$ as a function
${\cal W}(\vec{r}\left({\cal{A}}\right) , \vec{\epsilon})$
of the vectors $\vec{\epsilon}:= (\epsilon_0,\cdots, \epsilon_{M-1})$ and 
$\vec{r}\left({\cal{A}}\right) := (\mbox{dim}[{\cal A}_0],\cdots, \mbox{dim}[{\cal A}_{M-1}])$.  
As a special example  note that 
in the case of the single-state elements~(\ref{enerankpure}) from Eq.~(\ref{ide1}) we 
get 
\begin{equation} \label{identity-111} 
{W}_{\max}^{(\rm det)} ({\cal{A}}^{[1,j]};\hat{H}) = {\cal W}(\vec{r}[{\cal{A}}^{[1,j]}] , \vec{\epsilon}) = \vec{r}[{\cal{A}}^{[1,j]}]\cdot 
\vec{\epsilon} = \epsilon_i\;, 
\end{equation} 
while, recalling that  all passive states have maximum rank and hence belong to
${\mathfrak{S}}_{\cal{H}}$  we 
  can rewrite~(\ref{passiveMDEW}) as
\begin{eqnarray} \label{identity-1110} 
{\cal W}(\vec{r}\left( {\cal{H}}\right) , \vec{\epsilon}) = 0\;.
\end{eqnarray}
We next observe that the partial ordering~(\ref{succdef1}) introduced in Definition~\ref{defin0} can be used to rank the values of the function ${W}_{\max}^{(\rm det)} ({\cal A};\hat{H})$: 
\begin{lemma}\label{lemma33}
Let  ${\cal A}$ and ${\cal{A}}'$ be two (non-trivial) 
direct sums of linear subset of the energy eigenspaces of the system.
If ${\cal A}$ is   not  dominated by ${\cal{A}}'$ then 
the MDEW value ${W}_{\max}^{(\rm det)} ({\cal A};\hat{H})$
 is larger than or equal to 
 ${W}_{\max}^{(\rm det)} ({\cal A}';\hat{H})$,
i.e.
\begin{equation} 
{\cal A} \; {\succeq} \; {\cal{A}}' 
 \quad \Longrightarrow\quad   
{W}_{\max}^{(\rm det)} ({\cal A};\hat{H}) \geq   {W}_{\max}^{(\rm det)} ({\cal A}';\hat{H})\;.
\end{equation} 
\end{lemma}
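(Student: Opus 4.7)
The plan is to upgrade the argument used to prove Lemma~\ref{lemma111222}: where that lemma composes with an energy-preserving unitary that realizes an exact equivalence $\mathcal{A}' \sim \mathcal{A}$, here I would compose with the energy-preserving embedding that witnesses $\mathcal{A} \succeq \mathcal{A}'$, exploiting the fact that a deterministic protocol only requires the target subspace to \emph{contain} (not equal) the image of the support.

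First I would fix an optimal pair $(\hat{U}'_\star,\mu')$ for $\mathcal{A}'$: by Corollary~\ref{cor1} and the characterization~(\ref{def243}), there exist a map $\mu':\mathbb{S}(\mathcal{A}')\to\{0,\dots,M-1\}$ and a unitary $\hat{U}'_\star$ such that $\epsilon_i-\epsilon_{\mu'(i)}={W}^{(\rm det)}_{\max}(\mathcal{A}';\hat{H})$ and $U'_\star[\mathcal{A}'_i]\subseteq \mathcal{H}_{\mu'(i)}$ for every $i\in\mathbb{S}(\mathcal{A}')$. Next, from $\mathcal{A}\succeq\mathcal{A}'$ and Definition~\ref{defin0} I take an energy-preserving unitary $\hat{V}$ satisfying $V[\mathcal{A}_i]\subseteq\mathcal{A}'_i$ for all $i$; the dimension criterion following that definition also forces $\mathbb{S}(\mathcal{A})\subseteq\mathbb{S}(\mathcal{A}')$, so the restriction $\mu:=\mu'|_{\mathbb{S}(\mathcal{A})}$ is well defined and inherits the gap condition $\epsilon_i-\epsilon_{\mu(i)}={W}^{(\rm det)}_{\max}(\mathcal{A}';\hat{H})$.

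Then I would define the candidate optimal unitary for $\mathcal{A}$ as $\hat{U}_\star:=\hat{U}'_\star\hat{V}$. For every $i\in\mathbb{S}(\mathcal{A})$ one has
\begin{equation}
U_\star[\mathcal{A}_i]\;=\;U'_\star\bigl[V[\mathcal{A}_i]\bigr]\;\subseteq\;U'_\star[\mathcal{A}'_i]\;\subseteq\;\mathcal{H}_{\mu(i)},
\end{equation}
so $(\hat{U}_\star,\mu)$ fulfils the deterministic extraction criterion~(\ref{def243}) for any $\hat{\rho}\in\mathfrak{S}_{\mathcal{A}}$ (e.g.\ $\hat{\omega}_{\mathcal{A}}(0)$), yielding work value ${W}^{(\rm det)}_{\max}(\mathcal{A}';\hat{H})$ with zero variance. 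Taking the maximum on the left-hand side gives the claimed inequality ${W}^{(\rm det)}_{\max}(\mathcal{A};\hat{H})\geq {W}^{(\rm det)}_{\max}(\mathcal{A}';\hat{H})$.

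I do not expect a genuine obstacle in this argument: the only point that needs care is the inclusion (not equality) in the second line of~(\ref{def243}), as emphasized in the sentence preceding it; once one reads that condition as ``${U}[\mathcal{H}_i[\Phi(\hat{\rho})]]\subseteq\mathcal{H}_{\mu(i)}$'', the composition $\hat{U}'_\star\hat{V}$ works directly, and no ad hoc extension of partial isometries is required since $\hat{V}$ is already a full unitary on $\mathcal{H}$. The result can then be regarded as a strict generalization of Corollary~\ref{cor2}, which corresponds to the special case $\mathcal{A}\sim\mathcal{A}'$ where the inequality collapses to the equality proved there.
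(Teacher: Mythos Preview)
Your proof is correct and follows essentially the same approach as the paper: take the energy-preserving unitary $\hat{V}$ witnessing $\mathcal{A}\succeq\mathcal{A}'$, compose it with the optimal unitary $\hat{U}'_\star$ for $\mathcal{A}'$, and observe that $\hat{U}'_\star\hat{V}$ extracts ${W}^{(\rm det)}_{\max}(\mathcal{A}';\hat{H})$ deterministically from any state in $\mathfrak{S}_{\mathcal{A}}$. Your write-up is in fact more careful than the paper's own proof on the point that~(\ref{def243}) should be read as an inclusion $U[\mathcal{H}_i[\Phi(\hat\rho)]]\subseteq\mathcal{H}_{\mu(i)}$ rather than an equality, and in making explicit the inclusion $\mathbb{S}(\mathcal{A})\subseteq\mathbb{S}(\mathcal{A}')$.
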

\begin{proof} According to (\ref{succdef1}) there exists an energy-preserving unitary transformation $\hat{V}$, that maps the $i$-th subspace of ${\cal A}$ into the corresponding
one of ${\cal A}'$.
 Let now $\hat{U}'_{\star}$ be the optimal unitary map which applied to a generic states of ${\mathfrak{S}}_{\cal{A}'}$ enable us to extract the work value ${W}_{\max}^{(\rm det)} ({\cal A}';\hat{H})$ from the system. The thesis then follows by observing that  the unitary $\hat{U}_{\star}' \hat{V}$ applied to the elements of 
  ${\mathfrak{S}}_{\cal{A}}$ enable the deterministic extraction of the work level 
  ${W}_{\max}^{(\rm det)} ({\cal A}';\hat{H})$, which hence, by construction is a lower bound of the MDEW we can get from ${\mathfrak{S}}_{\cal{A}}$.
\end{proof} 
In particular from (\ref{succdef1MAXaqwe}) we get the following bounds 
\begin{equation} \epsilon_i \label{impoiiii} 
 \geq {W}_{\max}^{(\rm det)} ({\cal A};\hat{H}) \geq   {W}_{\max}^{(\rm det)} (\bar{\cal A};\hat{H})\geq 0\;, \quad \forall i\in {\mathbb{S}}\;,
\end{equation}  
where in writing the leftmost and rightmost terms  we used the identities (\ref{identity-111}) and
(\ref{identity-1110}) respectively.  
\\

The dependence of ${W}_{\max}^{(\rm det)} ({\cal A};\hat{H})$ with respect to the spectrum of
$\hat{H}$ for fixed choices of ${\cal A}$ is slightly more involved and, as will be discussed in
Sec.~\ref{sec:examples} 
 can lead to unexpected results. 
\section{Super-additivity properties and the asymptotic MDEW ratio}  
\label{sec:superadditivity}

Consider next the case where we have $n$-copies of the input state $\hat{\rho}$ for a system where the global Hamiltonian is
composed by a sum $\hat{H}^{(n)}: =\sum_{k=1}^n \hat{H}_k$ of homogeneous local terms ($\hat{H}_k$ being the local Hamiltonian of the $k$-th copy).
We are interested in determining how the $n$-copies MDEW, i.e. the quantity $W_{\max}^{(\rm det)}(\hat{\rho}^{\otimes n};\hat{H}^{(n)})$,
scales with~$n$. Let us start with some preliminary observations. First of all, notice that, due to the absence of 
interaction among the various copies of the system, 
the $n$-uses energy decoherence  LCPTP map of the model correspond to the $n$ copies of the map $\Phi$ of Eq.~(\ref{defphi}), i.e. $\Phi^{(n)} = \Phi^{\otimes n}$.
From this it hence follows that 
 if ${\cal A}$ is the support space of
$\Phi(\hat{\rho})$ then ${\cal A}^{\otimes n}$ is the support of $\Phi^{(n)}(\hat{\rho}^{\otimes n})=\Phi(\hat{\rho})^{\otimes n}$, i.e. 
\begin{eqnarray} 
\hat{\rho} \in {\mathfrak{S}}_{\cal{A}} \Longrightarrow \hat{\rho}^{\otimes n} \in {\mathfrak{S}}_{{{\cal A}}^{\otimes n}}\;. 
\end{eqnarray} 
From  Eq.~(\ref{spe}) and (\ref{spe222}) we can thus conclude that 
\begin{equation} 
W_{\max}^{(\rm det)}(\hat{\rho}^{\otimes n};\hat{H}^{(n)})= 
{W}_{\max}^{(\rm det)} ({\cal A}^{\otimes n};\hat{H}^{(n)})\ \;, \quad \forall \hat{\rho}\in  {\mathfrak{S}}_{\cal{A}}\;,
\end{equation} 
with 
\begin{eqnarray} 
{W}_{\max}^{(\rm det)} ({\cal A}^{\otimes n};\hat{H}^{(n)}) 
&=& {W}_{\max}^{(\rm det)} (\hat{\omega}_{{\cal A}^{\otimes n}}(0);\hat{H}^{(n)}) \nonumber \\
&=&{W}_{\max}^{(\rm det)} (\hat{\omega}^{\otimes n}_{\cal A}(0);\hat{H}^{(n)})\;,
\end{eqnarray}
where in the second line we used the identity $\hat{\omega}_{{\cal A}^{\otimes n}}(0)=\hat{\omega}^{\otimes n}_{\cal A}(0)$.
We can then arrive to the following inequality 
\begin{equation} 
 {W}_{\max}^{(\rm det)} ({\cal A}^{\otimes n};\hat{H}^{(n)})\geq n  {W}_{\max}^{(\rm det)} ({\cal A};\hat{H})\label{ien} \;,
\end{equation} 
by observing that if there exists a unitary procedure that extracts deterministic work $W_{\max}^{(\rm det)}({\cal A};\hat{H})$  from a single copy of a state (say $\hat{\omega}_{{\cal A}}(0)$)  we can simply
reiterate it to extract $n$ times such quantity from $n$ copy of the same density matrix (i.e. from ~$\hat{\omega}^{\otimes n}_{{\cal A}}(0)$).  
On the contrary there are examples which show that the gap in Eq.~(\ref{ien}) is non zero. 
For instance adding an extra energy level $|-1\rangle$  with energy  $-\delta$   to the example of Eq.~(\ref{ide1}), 
it turns out that as long as $\delta$ is positive and  $\neq \Delta$,
from $\hat{\rho}^{\otimes 2}$ we can extract energy $
2 \Delta  +\delta  > 2 \Delta$
which is larger than twice the max value we can get from a single copy of $\hat{\rho}$.
Using the same argument we can also conclude that for all $n$, $k$ integer the following super-additivity rule holds,  
\begin{eqnarray} \label{defn} 
&&\!\!\!\!\!\! \!\!\!\!\!\! {W}_{\max}^{(\rm det)} ({\cal A}^{\otimes (n+k)};\hat{H}^{(n+k)}) \geq\label{iene}  \\
&&\;\;\;\; {W}_{\max}^{(\rm det)} ({\cal A}^{\otimes n};
\hat{H}^{(n)})+{W}_{\max}^{(\rm det)}({\cal A}^{\otimes k};\hat{H}^{(k)})\;.\nonumber 
\end{eqnarray} 
A slightly less trivial observation is that there exist models for which even though 
${W}_{\max}^{(\rm det)} ({\cal A};\hat{H})=0$, 
  for sufficiently large $n$
one has
${W}_{\max}^{(\rm det)} ({\cal A}^{\otimes n};
\hat{H}^{(n)})>0$: we from now on we shall call this 
 {{\it strong-super-additivity}} property of the maximum deterministic TPM work. 
 Motivated by this observation
we define the asymptotic MDEW ratio as 
\begin{eqnarray} \label{impo11} 
{\cal R}({\cal A};\hat{H}) := \limsup_{n\rightarrow \infty}{\cal R}_n({\cal A};\hat{H})\;,
\end{eqnarray} 
with 
\begin{eqnarray} 
\label{def_Rn}
{\cal R}_n({\cal A};\hat{H})&:=& \frac{{W}_{\max}^{(\rm det)}({\cal A}^{\otimes n};\hat{H}^{(n)})  }{n} \nonumber \\
&=&{W}_{\max}^{(\rm det)}({\cal A}^{\otimes n};\hat{H}^{(n)}/n) \;. 
\end{eqnarray} 
From Eq.~(\ref{iene}) it follows that ${\cal R}_n({\cal A};\hat{H})$
 while not necessarily monotonically increasing is weakly increasing~\cite{wi}, meaning that, even if oscillating it still admits a proper $n\rightarrow \infty$ limit, i.e. 
\begin{eqnarray} \label{impo11bis} 
 \lim_{n\rightarrow \infty}{\cal R}_n({\cal A};\hat{H}) &=& \limsup_{n\rightarrow \infty}{\cal R}_n({\cal A};\hat{H})\\\nonumber 
  &=&
{\cal R}({\cal A};\hat{H}) := \max_{n} {\cal R}_n({\cal A};\hat{H})\;. 
\end{eqnarray} 

\subsection{Upper bounds} 
\label{sec:upper_bound}
A natural upper bound for ${\cal R}({\cal A};\hat{H})$ (and hence for all ${\cal R}_n({\cal A};\hat{H})$) is provided by 
 the minimal energy eigenvalue $\epsilon_{\min}({\cal A})$ of Eq.~(\ref{minimalenergyofA}) associated with  ${\cal A}$, i.e.
 \begin{eqnarray}{\cal R}({\cal A};\hat{H}) \leq \epsilon_{\min}({\cal A})   \;. \label{uppernew}
 \end{eqnarray}  
This formally follows from Eq.~(\ref{impoiiii}) by taking the minimum with respect to all possible choices
of $\epsilon_i$. More intuitively the bound~(\ref{uppernew}) can be explained by 
 that fact that i)  for all $n$, $\hat{\Pi}_{\cal A}^{\otimes n}$
 has a non-zero overlap with  the $n$-fold
copy of such level, and ii) we cannot extract more than $n \epsilon_{\min}({\cal A})$ energy from such configuration. 
Equation~(\ref{impo11}) establishes that the results of~\cite{Perarnau2018}  cannot be used to provide a full characterization of 
${\cal R}({\cal A};\hat{H})$.  Notice also that 
as a consequence of (\ref{uppernew})  it follows that 
if  ${\mathbb{S}}\left[ {\cal{A}}\right]$ contains the ground state energy level then the associated asymptotic ratio is zero, i.e.
\begin{eqnarray}\label{nogo}  0\in {\mathbb{S}}\; \Longrightarrow \; 
{\cal R}({\cal A};\hat{H}) ={\cal R}_n({\cal A};\hat{H}) =0\;. \end{eqnarray}

An improvement w.r.t.~(\ref{uppernew}) can be obtained 
 invoking~~(\ref{up1}), which for   $\hat{\rho}\in{\mathfrak{S}}_{\cal{A}}$ allows us to write 
\begin{eqnarray} 
&&{\cal R}_n({\cal A};\hat{H}) = \tfrac{{W}_{\max}^{(\rm det)} (\hat{\rho}^{\otimes n};\hat{H}^{(n)})}{n} 
\leq \tfrac{ {\cal E}(\Phi(\hat{\rho})^{\otimes n}; \hat{H}^{(n)})}{n}\nonumber 
\\  
&&\quad  \leq 
\limsup_{n\rightarrow \infty} \tfrac{ {\cal E}(\Phi(\hat{\rho})^{\otimes n}; \hat{H}^{(n)})}{n} =:
 {\cal E}_{\rm tot} (\Phi(\hat{\rho}); \hat{H})\;,
\end{eqnarray} 
where ${\cal E}_{\rm tot} (\Phi(\hat{\rho}); \hat{H})$ is the total ergotropy of the state $\Phi(\hat{\rho})$~\cite{def_ergo, Alicki2013}. 
Taking the minimum of the last term over all possible choices of $\hat{\rho}$, and taking the $n\rightarrow \infty$ limit, finally allows us to write
\begin{eqnarray} \label{newbound11}          
{\cal R}_n({\cal A};\hat{H})  &\leq & \min_{\hat{\rho}\in{\mathfrak{S}}_{\cal{A}} }  {\cal E}_{\rm tot} (\Phi(\hat{\rho}); \hat{H})\;.
\end{eqnarray} 
Recall next that the Gibbs-like states~(\ref{GIBBSlike}) are special instance of elements of 
${\mathfrak{S}}_{\cal{A}}$: therefore a simplified, yet in principle less performant, version of ~(\ref{newbound11}) 
 is given by 
 \begin{eqnarray} \label{newbound2} 
{\cal R}({\cal A};\hat{H}) &\leq & \min_{\beta >0}  {\cal E}_{\rm tot} (\hat{\omega}_{\cal A}(\beta); \hat{H})\;. 
\end{eqnarray} 
Written in this form it is now easy to verify that~(\ref{newbound2}) (and hence~(\ref{newbound11}))
implies~(\ref{uppernew}): indeed taking the limit for $\beta \rightarrow \infty$ 
and invoking Eq.~(\ref{lmi}) we can claim that
${\cal R}({\cal A};\hat{H})$ is upper bounded by the total ergotropy of $\hat{\omega}_{{\cal A}_{\min}}(0)$
which in turns cannot be larger than mean energy $\epsilon_{\min}({\cal A})$ of such a state. 
Most importantly, 
as shown in Appendix~\ref{deri}, at least in the case in which $\hat{H}$ has no degenerate spectrum on ${\cal A}$ (i.e. when for all $i\in {\mathbb{S}}$ the projectors $\hat{\Pi}_{{\cal A}_i}$ are rank-one operators), it is possible
to show that  the r.h.s. of~(\ref{newbound11}) and (\ref{newbound2}) coincide. 
Furthermore in  Appendix~\ref{sec:fixed_point} we show that the value of $\beta$ that realizes the minimum~(\ref{newbound2}) satisfies the special property
\begin{eqnarray}
\label{fixed_point}
S((\hat{\omega}_{\cal A}(\beta)) = S(\hat\tau_\beta) \;, \qquad \hat\tau_\beta := \tfrac{e^{-\beta\ham}}{ \Tr[e^{-\beta\ham}] } \;,
\end{eqnarray}
where $S(\cdots):= \mbox{Tr}[ (\cdots) \log (\cdots)]$ is the von Neumann entropy functional and 
$\hat\tau_\beta$ is the thermal Gibbs state of the model with inverse temperature $\beta$. 
A comparison between  $\mathcal{R}_{n}( \mathcal{A} ; \ham)$ and the upper bound~(\ref{newbound11}) 
is presented in Fig.~\ref{R100_plot}. 
\begin{figure}
	\includegraphics[width=\columnwidth]{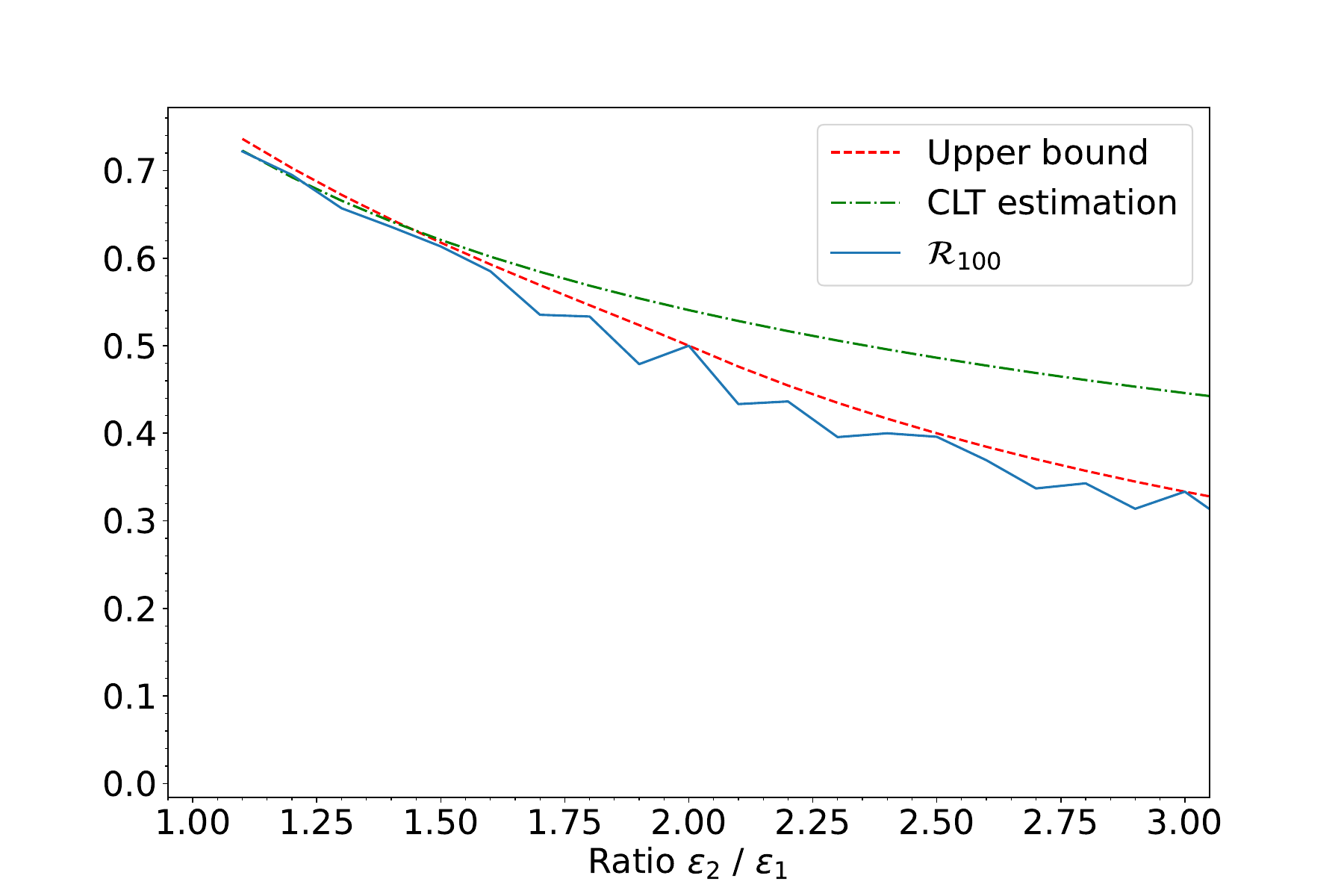}
	\caption{Values of $\mathcal{R}_{n}( \mathcal{A} ; \ham) / \epsilon_{d-1}$ for a 3-level Hamiltonian $\ham = \epsilon_1 \ket{1}\bra{1} + \epsilon_2\ket{2}\bra{2}$, for $\mathcal{A} = \mbox{Span}\{ \ket{1}, \ket{2} \}$ as a function of the (rational) value of the second energy level $\epsilon_2 > 1$ (shown every 0.1). Fixing $n=100$, the finite-size rates $\mathcal{R}_{n}( \mathcal{A} ; \ham)$ are compared with the upper bound~(\ref{newbound11}) and with the heuristic estimation~(\ref{CLT_estimation}) 
	based on the Central Limit Theorem presented in Sec.~\ref{sec:CLT_estimation}. }
	\label{R100_plot}
\end{figure}

\section{Examples}
\label{sec:examples}
In this section we present some simple (yet not-trivial) examples: these configurations serve as an ideal setting to explore the super-additivity effect outlined in Sec.~\ref{sec:superadditivity}, while also facilitating the development of a deeper physical intuition for the problem at hand.

\subsection{Non degenerate 3-level systems} 
 The simplest non-trivial  model we can think of is  
a non-degenerate three-level system Hamiltonian 
\begin{eqnarray}\label{defH111} 
\hat{H} = \sum_{i=0}^2 \epsilon_i |\epsilon_i\rangle\langle \epsilon_i|\;, 
\end{eqnarray} 
with input states $\hat{\rho}\in {\mathfrak{S}}_{\cal{A}}$
which assign non-zero population to just the two top-most energy levels, i.e. 
\begin{eqnarray}\label{spanA} 
{\cal A} = \mbox{Span}\{ |\epsilon_1\rangle, |\epsilon_2\rangle\}={\cal H}_1\oplus {\cal H}_2\;, \quad {\mathbb{S}}=\{ 1,2\}\;, 
\end{eqnarray} 
so that 
\begin{eqnarray} \label{input1} 
\hat{\omega}_{\cal A}(0) = \frac{1}{2}(  |\epsilon_1\rangle\langle \epsilon_1|+ |\epsilon_2\rangle\langle \epsilon_2| )\;. 
\end{eqnarray}
A first example of strong-super-additivity of  the MDEW is obtained by setting 
\begin{eqnarray} \label{examples} 
\begin{cases}
\epsilon_2=3 E\;, \\
\epsilon_1=2 E \;, \\
\epsilon_0=0\;,
\end{cases} 
\end{eqnarray} 
$E >0$ being a fixed constant, see panel {\it a)} of Fig.~\ref{figexamples}. 
It is easy to check that under this condition  the maximum deterministic work we can get from a single
copy of $\hat{\omega}_{\cal A}(0)$   is zero, i.e.
  \begin{eqnarray} {W}_{\max}^{(\rm det)} (\hat{\omega}_{\cal A}(0);\hat{H})=0 \;  
  \Longrightarrow \;  {\cal R}_{1}({\cal A};\hat{H})  =0 \;, 
  \end{eqnarray} 
  (indeed the only value of $W$ which fulfils~(\ref{def243}) for   $i=1$ is $2E$, which however is not acceptable for $i=2$).
Nonetheless it turns out that already for $n=2$ one has 
 \begin{eqnarray} \label{ratio2} 
 {W}_{\max}^{(\rm det)} (\omega^{\otimes 2}_{\cal A}(0);\hat{H}^{(2)}) = 2 E \Longrightarrow 
 R_2({\cal A},\hat{H}) = E\;. 
 \end{eqnarray} 
 This result can be obtained employing a non local unitary $\hat{U}^{(2)}$ that induces the following transitions on the populated energy levels, 
 \begin{eqnarray} 
\begin{cases}
|22\rangle \longrightarrow |11\rangle   \qquad (W=E +E = 2E)\;, \\\\
|21\rangle \longrightarrow |20\rangle    \qquad (W=0+2E=2E)\;,\\
|12\rangle \longrightarrow |02\rangle \qquad (W=2E+0=2E) \;, \\\\
|11\rangle \longrightarrow |10\rangle  \qquad (W=0+2E=2E)\;,\label{solu2bis} 
\end{cases} 
\end{eqnarray} 
where  hereafter we use the shorthand notation $|i j \rangle$ to represents the state   $|\epsilon_i\rangle\otimes |\epsilon_{j}\rangle$.
To see that this  is the optimal solution for $n=2$ notice that according~(\ref{def243}), the only two admissible values of 
$W$ associated with the energy level $|11\rangle$, are $2E$ (attained in Eq.~(\ref{solu2bis})) 
 and $4E$ (reachable e.g. through a unitary that maps $|11\rangle$ into $|00\rangle$). The last possibility however is not
 acceptable since there are no unitary transitions of (say) $|21\rangle$ that could lead to such energy gain (indeed for such level
 the only admissible values of $W$ compatibile with ~(\ref{def243})
 are $E$, $2E$, $3E$, and $5E$).   
 In a similar fashion one can show that
 using the three body unitary $U^{(3)}$ that induces the mapping 
  \begin{eqnarray} 
\begin{cases}
|222\rangle \longrightarrow |201\rangle     \qquad (W=0+3E+E = 4E)\;,  \\\\
|221\rangle \longrightarrow |101\rangle   \qquad (W=E+3E+0 = 4 E)\;, \\
|212\rangle \longrightarrow |011\rangle  \qquad (W=3E+0+E = 4 E)\;,\\
|122\rangle \longrightarrow |110\rangle \qquad (W=0+E+3E = 4 E)\;,\\\\
|211\rangle \longrightarrow |200\rangle  \qquad (W=0+2E+2E = 4 E)\;, \\
|121\rangle \longrightarrow |020\rangle \qquad (W=2E+0+2E = 4 E)\;,\\
|112\rangle \longrightarrow |002\rangle  \qquad (W=2E+2E+0 = 4 E)\;, \\\\
|111\rangle \longrightarrow |100\rangle   \qquad (W=0+2E+2E = 4 E)\;,\\
\end{cases} 
\end{eqnarray} 
we get
 \begin{eqnarray} \nonumber 
 {W}_{\max}^{(\rm det)} (\omega^{\otimes 3}_{\cal A}(0);\hat{H}^{(3)}) = 4 E \Longrightarrow 
 R_3({\cal A},\hat{H}) = \frac{4E}{3}\;,
 \end{eqnarray} 
 which further improves the   MDEW ratio reported in Eq.~(\ref{ratio2}). 
 
\begin{figure}
	\includegraphics[width=\columnwidth]{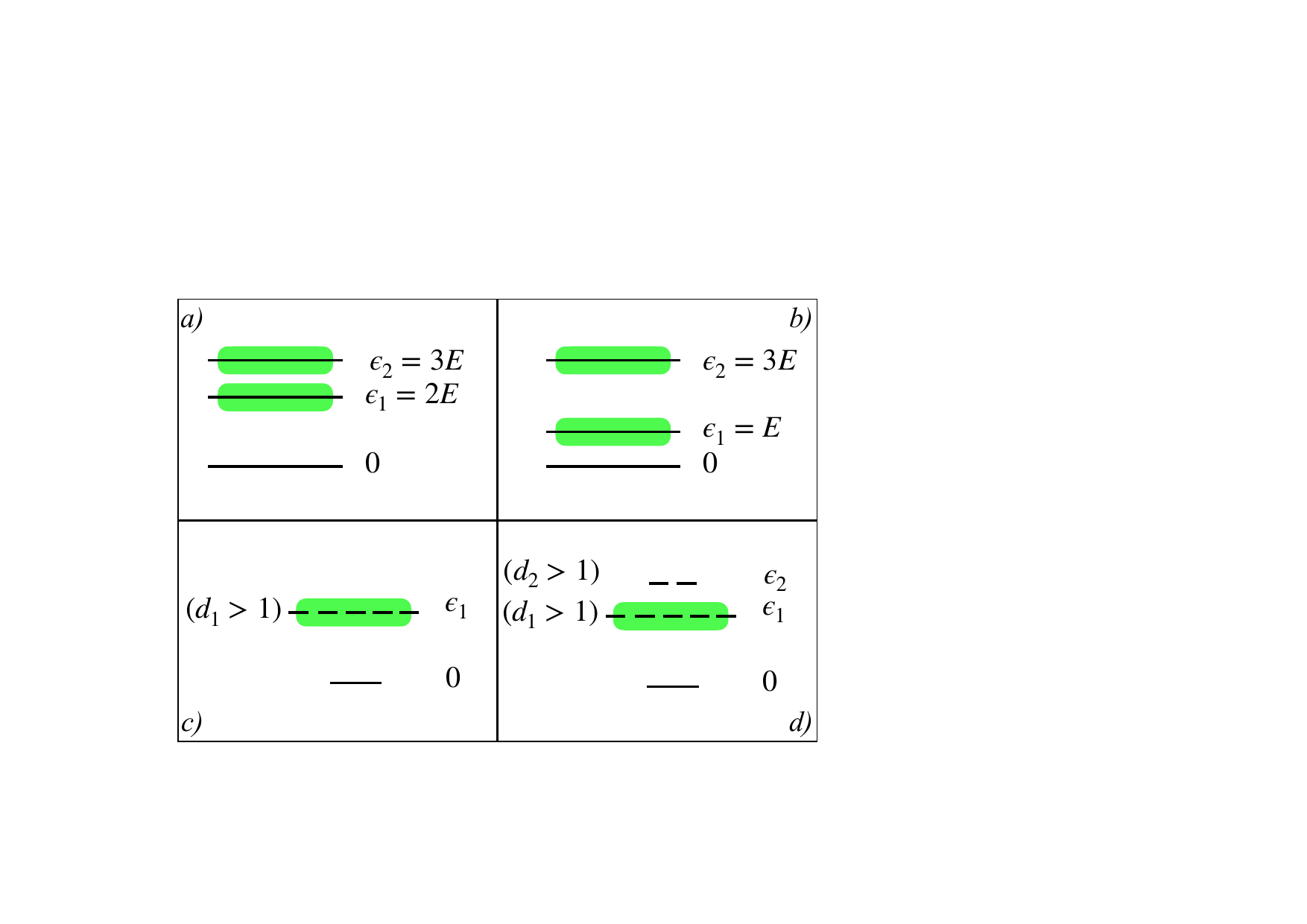}
	\caption{Schematic representation of the examples analyzed in Sec.~\ref{sec:examples}. Panel {\it a)} and {\it b)} describe the  non-degenerate 
	3-level models of Eqs.~(\ref{examples}) and (\ref{ex1}).
	Panel {\it c)} describes  
	2-level model of Eqs.~(\ref{defH111asdfd}) with non trivial degeneracy  associated
	with $\epsilon_1$. Finally panel {\it d)} describe a 3-level model with non-trivial degeneracies for both 
	$\epsilon_1$ and $\epsilon_2$, in which however the highest one is not occupied. 
	In all the examples the green band indicate that the associated level is initially occupied by the input state.}
	\label{figexamples}
\end{figure}
A numerical study of $R_n({\cal A},\hat{H})$ for larger values of $n$ is presented in Fig.~\ref{figuar1}: as evident from the plot in this case, for large $n$ the MDWE ratio approaches the 
upperbound~(\ref{newbound2}).
 \begin{figure}
	\includegraphics[width=\columnwidth]{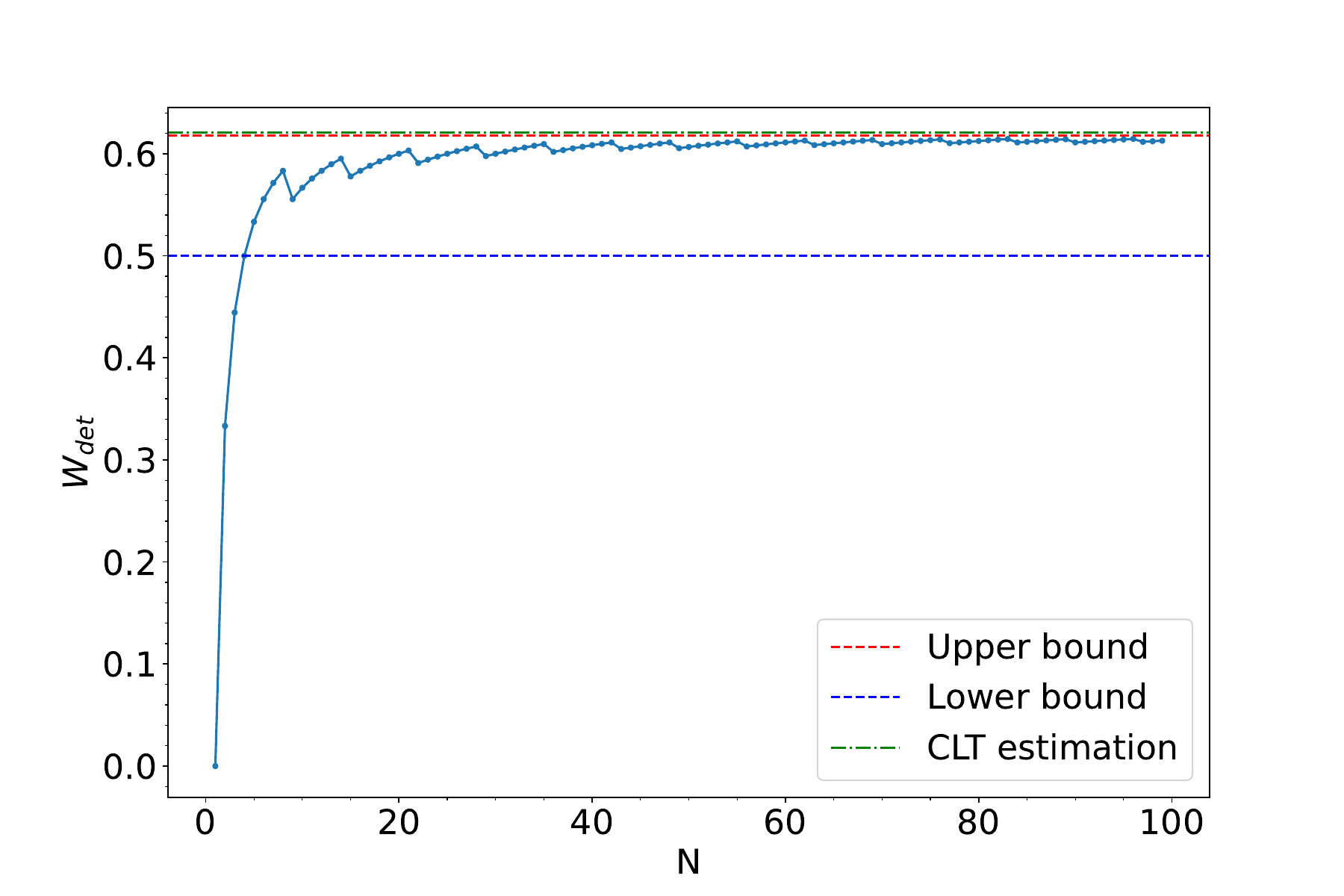}
	\caption{MDEW rate $R_n({\cal A},\hat{H})$ for the Hamiltonian model~(\ref{examples}). The red dashed line corresponds to the upperbound~(\ref{newbound2}) while the blue dashed line to the lower bound~(\ref{lowerlower}) introduced in Sec.~\ref{sec:rational_spectra}.
	 The green mixed line marks the heuristic estimation~(\ref{CLT_estimation}) presented in Sec.~\ref{sec:CLT_estimation}.	 }
	\label{figuar1}
\end{figure}

A class of models (\ref{defH111}) for which the asymptotic ratio $R({\cal A},\hat{H})$ can be explicitly computed is obtained by setting
\begin{eqnarray} 
\begin{cases}
\epsilon_2=3E\;, \\
\epsilon_1=E \;, \\
\epsilon_0=0\;,
\end{cases} \label{ex1}
\end{eqnarray}
see panel {\it b)} of Fig.~\ref{figexamples}.
Notice that in this case the energy gap $\epsilon_1-\epsilon_0=E$ is {\it half} of the energy gap $\epsilon_2-\epsilon_1=2 E$, while in the previous example it was exactly the opposite. 
Notice also that for ${\cal A}$ as in Eq.~(\ref{spanA}) 
 the upper bound Eq.~(\ref{uppernew}) implies 
 \begin{eqnarray}{\cal R}({\cal A}^{\otimes n};\hat{H}^{(n)}) \leq {\cal R}({\cal A};\hat{H}) \leq E \;. \label{uppernew1}
 \end{eqnarray} 
It is easy to see that similarly to the  model of Eq.~(\ref{input1}),
 also in this case we have 
\begin{eqnarray} 
{W}_{\max}^{(\rm det)} (\hat{\omega}_{\cal A}(0);\hat{H})=0\;.  
\end{eqnarray} 
For $n=2$ we get the same result, i.e. 
\begin{eqnarray} 
{{W}_{\max}^{(\rm det)} (\hat{\omega}^{\otimes 2}_{\cal A}(0);\hat{H}^{(2)})}=0\;. 
\end{eqnarray} 
To see this notice that  here from $|11\rangle$ we have only two possible transitions: toward $|10\rangle$ or $|01\rangle$ which corresponds to the extraction of an energy $E$, or
toward $|00\rangle$ with extraction of energy $2E$. From the doublet $|12\rangle$, $|21\rangle$ we have only two possibility:
either a transition toward $|02\rangle,|20\rangle$ with energy $E$, or a transition toward  $|01\rangle,|10\rangle$  with energy extraction of $3E$.
So we have a match for $E$. However there are no transitions for $|22\rangle$ that produces such an amount of the energy (the minimum energy we can extract from such level is indeed $2E$). 
What about $n=3$? In this case we observe that the bound (\ref{uppernew1}) gets saturated, i.e.
 \begin{equation}{W_{\max}^{(\rm det)}(\hat{\omega}^{\otimes 3}_{\cal A}(0);\hat{H}^{(3)})
  } =3E  \Longrightarrow 
  R_3({\cal A}; \hat{H})  =E\;,   \label{sat}
 \end{equation} 
 implying that $E$ is the asymptotic MDEW ratio of the model. 
 The result of Eq.~(\ref{sat}) 
 is achieved with the choice of the unitary $\hat{U}^{(3)}$ which induce the mappings
 \begin{eqnarray} 
\begin{cases}
|222\rangle \longrightarrow |220\rangle    \qquad (W=0+0+3E = 3E)\;,  \\\\
|221\rangle \longrightarrow |021\rangle   \qquad (W=3E+ 0+0= 3E)\;,  \\
|212\rangle \longrightarrow |210\rangle  \qquad (W= 0+0+3E = 3E)\;, \\
|122\rangle \longrightarrow |102\rangle   \qquad (W=0+3E +0= 3E)\;, \\\\
|211\rangle \longrightarrow |011\rangle  \qquad (W=3E +0+0= 3E)\;,\\
|121\rangle \longrightarrow |101\rangle \qquad (W=0+3E +0= 3E)\;,\\
|112\rangle \longrightarrow |110\rangle  \qquad (W=0+0+3E= 3E)\;, \\\\
|111\rangle \longrightarrow |000\rangle  \qquad  (W=E+E+E= 3E)\;.\\
\end{cases} 
\end{eqnarray} 
Equation~(\ref{sat})  can be extended to the whole class of energy spectra of the form 
\begin{eqnarray} 
\begin{cases}
\epsilon_2=N E\;, \\
\epsilon_1= E \;, \\
\epsilon_0=0\;,
\end{cases} \label{ex1bis}
\end{eqnarray} 
with $N\geq 2$ integer. 
To see this take $n=N$ and use the unitary $\hat{U}^{(N)}$ which induces the mapping
 \begin{eqnarray} 
\begin{cases}\nonumber 
|22\cdots 22\rangle \longrightarrow |02\cdots 22\rangle    \quad  \mbox{\small$(W=NE + 0+\cdots +0 = NE)$,} \\\\
|22\cdots 21\rangle \longrightarrow |02\cdots 21\rangle     \quad  \mbox{\small$(W=NE + 0+\cdots +0 = NE)$,} \\
\qquad \mbox{{\it permutations}}    \\\\
|22\cdots 211\rangle \longrightarrow |02\cdots 211\rangle \;\; \mbox{\small$(W=NE + 0+\cdots +0 = NE)$,} \\
\qquad \mbox{{\it permutations}}     \\\\
\qquad \qquad  \vdots 
\\\\
|11\cdots 11\rangle \longrightarrow |00\cdots 00\rangle    \quad  \mbox{\small$(W=E + E+\cdots +E = NE)$,}\\
\end{cases} 
\end{eqnarray} 
(in other words $\hat{U}^{(N)}$ maps $|11\cdots 11\rangle$ into the ground while when acting any other eigenvector of $\hat{\omega}_{\cal A}^{\otimes N}(0)$ replace one (and only one) of the $2$ terms with a $0$).
By construction we have that 
\begin{equation}{W_{\max}^{(\rm det)}(\hat{\omega}^{\otimes N}_{\cal A}(0);\hat{H}^{(N)})
  } =NE  \Longrightarrow 
  R_N({\cal A}; \hat{H})  =E\;,   \label{satN}
 \end{equation} 
which once more saturates the bound (\ref{uppernew}).

\subsection{Degenerate 2-level models} \label{sec:deg2lm} 
Adding degeneracy in the model typically increases the complexity of the MDEW analysis. 
Consider for instance a two-level model  with degeneracy $d_1 >1$ for  the excited level $\epsilon_1$ and with  the ground level 
 $\epsilon_0=0$ that has no degeneracy, i.e. 
 \begin{eqnarray}\label{defH111asdfd} 
\hat{H} = \epsilon_1  \sum_{j=0}^{d_1-1} |\epsilon_{1,j} \rangle\langle \epsilon_{1,j}| \;, \end{eqnarray} 
 with input states  $\hat{\rho}\in {\mathfrak{S}}_{\cal{A}}$
 which assign non-zero population to
 all the elements of the excited level leaving the ground level empty, i.e. 
\begin{eqnarray}\label{spanAAAA} 
{\cal A} = \mbox{Span}\{ |\epsilon_{1,0} \rangle, 
\cdots , |\epsilon_{1,d_1-1}\rangle\}={\cal H}_1\;, \quad {\mathbb{S}}=\{ 1\}\;,
\end{eqnarray} 
see panel {\it c)} of Fig.~\ref{figexamples}.
 Clearly for $n=1$ we have that no energy can be extracted in the absence of fluctuations, i.e.
  \begin{eqnarray}\label{rndsadfadf} {W}_{\max}^{(\rm det)} (\hat{\omega}_{\cal A}(0);\hat{H})=0 \;  
  \Longrightarrow \;  {\cal R}_{1}({\cal A};\hat{H})  =0 \;.  \end{eqnarray} 
  The situation changes however already for $n=2$. 
 Indeed in this case, if $d_1=2$ we can get a rate of $\epsilon_1/2$,
  by using the following unitary operation:
  \begin{eqnarray} 
\begin{cases}
|1_01_0\rangle \longrightarrow |01_0\rangle  \;,  \\
|1_11_1\rangle \longrightarrow |01_1\rangle    \;,\\
|1_01_1\rangle \longrightarrow |1_00\rangle \;, \\
|1_11_0\rangle \longrightarrow |1_10\rangle \;,\label{solu2} 
\end{cases} 
\end{eqnarray} 
where we used $|1\rangle$, $|1'\rangle$ to represent the two orthogonal states
$|\epsilon_{1,0}\rangle$ and $|\epsilon_{1,1}\rangle$ 
 of level 1. 
More generally, for $d_1>1$ generic, we can use $n$ copies of the state we could extract the work $k \epsilon_1$  by promoting $k$ excited states into the ground level if the following conditions
are satisfied
\begin{eqnarray} 
\#\mbox{input}(n)  \leq \#\mbox{output}(n,k)\;, \label{ff} 
\end{eqnarray} 
 where $\#\mbox{input}(n) $ and $\#\mbox{output}(n,k)$ are the number of orthogonal configurations associated, rispectively, with the $n$ copies of the input state  and they transformed versions. 
 The first number corresponds to the possible ways in which we form $n$-long strings with $d_1$ symbols, i.e. $\#\mbox{input}(n)  = d_1^n$,
while the second corresponds to the possible ways in which we can form $n$-long strings using $d_1$ symbols under the constraint that $k$ elements are fixed equal to zero, i.e.
$\#\mbox{output}(n,k) =\left(\begin{array}{c}n\\ k\end{array} \right)  d_1^{n-k}$. 
Accordingly Eq.~(\ref{ff}) reduces to the constraint
\begin{eqnarray} 
C_{n,k}(d_1) := \left(\begin{array}{c}n\\ k\end{array} \right)  - d_1^k\geq 0\;. \label{constasdadf} 
\end{eqnarray} 
 For each $n$, and $k$ fulfilling the above expression $k/n$ represents an achievable rate. Observe also that for each fixed $n$ the maximum $k$ that is compatible 
 with~(\ref{constasdadf}) represents the maximum rate attainable (indeed the only way we have to get energy from the system is to promote the excited state into the ground).
 Accordingly we can write 
  \begin{eqnarray}\label{rn} 
 {\cal R}_{n}({\cal A};\hat{H})  &=& \epsilon_1 \max_{k}  \{ k / n :  C_{n,k}(d_1) \geq 0\}\;, \\ 
  {\cal R}({\cal A};\hat{H})  &=&  \epsilon_1  \lim_{n\rightarrow \infty} \max_{k}  \{ k / n :  C_{n,k}(d_1) \geq 0\} \nonumber \\ 
  &=&  \epsilon_1 \max_{n,k}  \{ k / n :  C_{n,k}(d_1) \geq 0\} \;. \label{rinf} 
 \end{eqnarray} 
 Recalling that for all $n$ and $k$ we have 
 \begin{eqnarray} \label{upper} 
\left(\frac{e n}{k} \right)^k \geq \left(\begin{array}{c}n\\ k\end{array} \right)  \geq \left(\frac{n}{k} \right)^k \;. \label{const1} 
\end{eqnarray} 
The lower bound implies that   for $n,k$ such that  $\frac{k}{n}  \leq \frac{1}{d_1}$ one has  $C_{n,k}(d_1)\geq 0$; the upper bound instead can be used to 
verify that $\frac{k}{n}  >  \frac{e}{d_1}$  instead we always get $C_{n,k}(d_1) < 0$.
 Replacing this in the above expression yields the following bounds for ${\cal R}({\cal A};\hat{H})$, 
\begin{eqnarray}   e\epsilon_1/d_1 \geq  {\cal R}({\cal A};\hat{H})  \geq  \epsilon_1/d_1\;.
\end{eqnarray} 
The above analysis  can be easily extended to include also those configurations where the
input states of the system do occupy all the full energy subspace ${\cal H}_1$ associated with the energy level $\epsilon_1$. In fact suppose that ${\cal A}$ covers only $\delta_1<d_1$ of the vectors of ${\cal H}_1$, e.g.
\begin{eqnarray}\label{spanAAAA111} 
{\cal A} = \mbox{Span}\{ |\epsilon_{1,0} \rangle, 
\cdots , |\epsilon_{1,\delta_1-1}\rangle\}\subset{\cal H}_1\;, \quad {\mathbb{S}}=\{ 1\}\;.
\end{eqnarray} 
Under this condition we can still use Eq.~(\ref{ff}) to identify the  work values which can be extracted deterministically: in this case however the left-hand-side term of such inequality assumes a smaller value (i.e. $\#\mbox{input}(n)=\delta_1^n$), and (\ref{constasdadf}) gets replaced by the 
weaker constraint 
\begin{eqnarray} 
C_{n,k}(d_1,\delta_1) := \left(\begin{array}{c}n\\ k\end{array} \right)  
- d_1^k \left(\frac{\delta_1}{d_1}\right)^{n} \geq 0\;.
\label{const12asdfadf} 
\end{eqnarray} 
Inserting this into~(\ref{rn}) and (\ref{rinf}) leads to MDEW rates which are larger than or equal to the one obtained for $\delta_1=d_1$ in agreement with the prediction of Lemma~\ref{lemma33}. 
For instance for $\delta_1=1$ and $d_1\geq 2$ the inequality~(\ref{const12asdfadf}) can be always fulfilled with $k=n$ leading to  ${\cal R}_n({\cal A};\hat{H})=\epsilon_1$ which corresponds to  the maximum work one can hope to extract from the system. 
More generally the new values of the rates are given by  
  \begin{eqnarray}\label{rnnasdaf} 
 {\cal R}_{n}({\cal A};\hat{H})  &=& \epsilon_1  \max_{k}  \{ k / n :  C_{n,k}(d_1,\delta_1)  \geq 0\}\;, \\ 
  {\cal R}({\cal A};\hat{H})  &=&  \epsilon_1 \lim_{n\rightarrow \infty} \max_{k}  \{ k / n :  C_{n,k}(d_1,\delta_1)  \geq 0\} \nonumber \\ 
  &=& \epsilon_1  \max_{n,k}  \{ k / n :  C_{n,k}(d_1,\delta_1)  \geq 0\} \;. \label{rinf2} 
 \end{eqnarray}

\subsection{Free levels at higher energy} 
As established by Lemma~\ref{lemma33} reducing the occupancies numbers 
$\overline{d}_i = \mbox{dim}[{\cal A}_i]$ 
of the energy eigenspaces of the model tends to improve the MDEW of the model: this is a direct consequence of the fact that smaller values of the $\overline{d}_i$'s corresponds to weaker constraints on the associated optimization problem.  
A similar effect arises when we increase the degeneracy of $\hat{H}$ while keeping the same 
occupation level of ${\cal A}$. For instance in the example of Sec.~\ref{sec:deg2lm}, setting
$d_0=d_1$, for ${\cal A}$ as in Eq.~(\ref{spanAAAA}) will always allow for an optimal MDEW value rate of ${\cal R}_n({\cal A};\hat{H})=\epsilon_1$. Strangely enough the same phenomenon can also occur if we add  extra levels to $\hat{H}$ with energy values that are {\it above} the one occupied by ${\cal A}$. 
To see this consider the case of a 3-level model Hamiltonian $\hat{H}$ obtained by adding  an extra level $\epsilon_2> \epsilon_1$ with degeneracy $d_2\geq 1$ to the one presented
in~(\ref{defH111asdfd}), i.e. 
 \begin{eqnarray}\label{defH111asdfd1} 
\hat{H} = \epsilon_1  \sum_{j=0}^{d_1-1} |\epsilon_{1,j} \rangle\langle \epsilon_{1,j}| +
\epsilon_2 \sum_{j=0}^{d_2-1} |\epsilon_{2,j} \rangle\langle \epsilon_{2,j}|\;. \end{eqnarray} 
while maintaining ${\cal A}$ as in (\ref{spanAAAA}), i.e. assigning zero occupation to both
the ground level and the new one, and assuming full occupancy for the intermediate level --
see panel {\it d)} of Fig.~\ref{figexamples}.
 Under these assumptions one would be tempted to conclude that the level $\epsilon_2$ plays no fundamental role in the energy extraction process: indeed
promoting populations from $\epsilon_1$ to $\epsilon_2$ will cost an energy $\epsilon_2-
\epsilon_1$ which will contribute negatively on the overall budget. It turns out however that
under certain conditions such a loss can be exploited to improve the MDEW efficiency above the one described in Eq.~(\ref{rn}) -- which in the  context of the 3-level model corresponds to the restricted set of TPM strategies  where we can only move 
population from $\epsilon_1$ toward the ground state. 
To see this consider for instance the case where we have at disposal $n$ copies of the input
state $\hat{\rho}\in {\mathfrak{S}}_{\cal{A}}$.
Given hence $k^{\star}$ the maximum $k$ that fulfils~(\ref{constasdadf}), from (\ref{rn}) we know that 
the strategies that convert states of $\epsilon_1$ into the ground can
 achieve at most the rate 
  \begin{eqnarray}\label{rnopt} 
\tilde{\cal R}_{n}({\cal A};\hat{H}) = \epsilon_1 k^{\star}/n \;, 
 \end{eqnarray} 
  (notice that typically this will be smaller than $\epsilon_1$ since
 $k^{\star}< n$). 
  Exploiting the presence of $\epsilon_2$ we can try to do better e.g. promoting $k^{\star}+1$ states 
 $\epsilon_1$  into the ground {\it and} one extra state  $\epsilon_1$ into one of the levels $\epsilon_2$.
Indeed assuming that such unitary exists we could gain a rate equal to
\begin{eqnarray} \label{value} 
 {\cal R}_{n} &=& \frac{\epsilon_1(k^{\star}+1) -( \epsilon_2-\epsilon_1)}{n} =
 \frac{\epsilon_1k^{\star} +  (2\epsilon_1 -\epsilon_2)}{n} \nonumber \\
 &=& \tilde{\cal R}_{n}({\cal A};\hat{H}) + \frac{(2\epsilon_1 -\epsilon_2)}{n} \;, 
\end{eqnarray} 
which   is greater than $\tilde{\cal R}_{n}({\cal A};\hat{H})$ whenever $2\epsilon_1 >\epsilon_2 $.
A sufficient condition for this to happens, is that there are sufficiently many output configurations with 
$k^{\star}+1$ ground states, $n-(k^{\star}+2)$ states $\epsilon_1$ and one state $\epsilon_2$, to accomodate
the input configurations $\#\mbox{input}(n)  = d_1^n$ of $\hat{\rho}^{\otimes n}$.
Considering the degeneracy we have assumed for $\hat{H}$, the total number of the above output configurations can be explicitly computed: they are 
 \begin{equation} 
\#\mbox{output}(n,k^{\star}+1,1) :=\frac{n! \; d_1^{n-(k^{\star}+1)} d_2 }{ (n-k^{\star}-2)!(k^{\star}+1)!1!}   \;.
\end{equation}
Accordingly the possibility of reaching the rate~(\ref{value}) is determined by the inequality
\begin{eqnarray} \label{dd2} 
 d_2 \geq \left( \frac{(n-k^{\star}-2)!(k^{\star}+1)!}{n!} \right)  d_1^{k^{\star}+1}\;.
\end{eqnarray}
As an example consider for instance what happens for $d_1=2$ and $n=3$. 
Under this condition one notices that $k^{\star}=1$, so that $k^{\star}/n=1/3$. On the contrary
the condition~(\ref{dd2}) becomes $d_2 \geq 4/3$: therefore it is sufficient to have $d_2=2$ to bring the rate from 
$\epsilon_1/3$ to
$(\epsilon_1 + (2\epsilon_1-\epsilon_2))/3$.
Notice that the presence of $d_2$ can be exploited to lead even more drastic improvements: for instance, for fixed $n$, one can try to promote $n-1$ states to the ground paying the price of having a single state in $\epsilon_2$.
Under this condition one could push the rate at
\begin{eqnarray} \label{valueR} 
 {\cal R}_{n} = \frac{(n-1)\epsilon_1-(\epsilon_2-\epsilon_1) }{n}= \epsilon_1- \frac{ \epsilon_2}{n} \;, 
\end{eqnarray} 
 which for $n$ sufficiently large approximates the upper bound 
 $\epsilon_1$ dictated by~(\ref{uppernew}).
 The condition for this to happens is that $d_2$ is sufficiently large to ensure that the output configurations
 with $n-1$ ground states and $1$ state $\epsilon_2$ are larger than  $\#\mbox{input}(n)  = d_1^n$, i.e.
 \begin{eqnarray} 
 n d_2 \geq d_1^n\;, \qquad \Longrightarrow \qquad d_2 \geq d_1^n/n\;. 
 \end{eqnarray}

\section{Rational spectra} 
\label{sec:rational_spectra}
Building up from the examples analyzed in the previous section, here 
we focus on a special class of models for which one can explicitly prove that
the asymptotic MDEW ratio  is 
non zero.
Specifically we shall consider the case where   
the non-empty elements set ${\mathbb{S}}$ of the subspace ${\cal A}$ identifies energy levels 
of the Hamiltonian $\hat{H}$ that are proportional to integer numbers  up to a common multiplicative factor~$E$:
\begin{equation} \label{imponew} 
\forall i\in \mathbb{S}\quad  \begin{cases}  \hat{\Pi}_{{\cal A}_i} = \sum_{s=0}^{\overline{d}_i-1} |\epsilon_{i,s}\rangle\langle \epsilon_{i,s}| \;, \;\; 
\hat{H} |\epsilon_{i,s}\rangle = \epsilon_{i} |\epsilon_{i,s}\rangle\;,  \\ \\ 
\forall i,s,i^\prime,s^\prime : \braket{\epsilon_{i,s} | \epsilon_{i^\prime,s^\prime}} = 
\delta_{ii^\prime} \delta_{ss^\prime} \;, \\ \\
 \exists  \;  m_i \in \mathbb{N} : \epsilon_i= E \;  m_i \;,
 \end{cases} 
\end{equation} 
where 
\begin{eqnarray} \overline{d}_i:= \mbox{dim}[ {\cal A}_i]\leq d_i\;,
\end{eqnarray}
 is the dimension of the $i$-th energy block ${\cal A}_i$ of ${\cal A}$, and 
$\{ |\epsilon_{i,s}\rangle\}_{s=1,\cdots, \overline{d}_i}$ an orthonormal basis for such space  
(of course such scenario includes as special instances the settings where the entire spectrum
of $\hat{H}$ -- not just the part of it that it is filtered out in $\cal A$ -- fulfils  the above requirement). 
Under the condition~(\ref{imponew}) we can prove that, as long as the ground state of the system is not populated, i.e. if 
$0\notin \mathbb{S}$, the asymptotic MDEW ratio of the model is 
explicitly non zero (of course if $0\in \mathbb{S}$ then 
the MDEW ratio is always null due to Eq.~(\ref{nogo})).
 In order to do so we shall provide a lower bound for ${\cal R}({\cal A};\hat{H})$ which is explicitly not zero. 

Assume hence  $\mathbb{S}$ to be a collection of energy eigenvectors  indexes which does not include the ground energy level and that contains at least two distinct elements (the case in which $\mathbb{S}$ has a unique element is already solved in Eq.~(\ref{ide1})). 
Define then $M_\mathbb{S}$  to be the least common multiplier of the integers $m_i$'s associated with the populated  part of the spectrum of $\hat{H}$, i.e. 
\begin{equation} 
M_\mathbb{S} := \mbox{lcm}\left\{ m_i : i \in  \mathbb{S}\right\}\;. 
\label{def_Ms}
\end{equation} 
and the quantities 
\begin{eqnarray} 
K_i &:=& \frac{M_\mathbb{S}}{m_i} + \overline{d}_i - 1  \;, \label{def_Ki} \\
K_\mathbb{S}&:=& \sum_{i\in  \mathbb{S} } \overline{d}_i (K_i-1) \label{def_Ks} \;.
\end{eqnarray} 
Notice that since by construction  the $m_i$'s are all distinct integer numbers greater than  or equal to 1, we
can conclude that $M_\mathbb{S} \geq 2$. Also it follows that  the $K_i$'s are all greater than or equal to 1 and that  
\begin{eqnarray} \label{mmm} 
\frac{M_\mathbb{S}}{m_i}  \neq \frac{M_\mathbb{S}}{m_j} \;,  \qquad \forall i\neq j  \in  \mathbb{S} \;. 
\end{eqnarray} 
Observe next that from Eq.~(\ref{imponew}) it follows that 
 for $n$ integer,  the eigenvectors of 
 $\hat{\omega}^{\otimes n}_{\cal A}(0)$ are provided by the tensor product states of the form \begin{eqnarray}
|\epsilon_{\vec{i},\vec{s}}\rangle
 : = | \epsilon_{{i}_1, s_1}\rangle \otimes | \epsilon_{{i}_2, s_2}\rangle \otimes  \cdots \otimes | \epsilon_{{i}_n, s_n}\rangle \;, 
 \end{eqnarray} 
 meaning that each eigenstate can be uniquely identify by a couple $(\vec{i}, \vec{s})$,
 with $\vec{i}:= (i_1,i_2,\cdots, i_n)\in \mathbb{S}^n$ and $\vec{s}:= (s_1,s_2,\cdots, s_n)$. Let $\mathbb{V}$ denote the set of allowed vectors:
 \begin{eqnarray}
 \mathbb{V} := \left\{ ( \vec{i}, \vec{s} ) : \vec{i} \in  \mathbb{S}^{n}, \forall j \;  0 \leq s_j < \overline{d}_i \right\}\;.
 \end{eqnarray}
 For each couple $(\vec{i}, \vec{s}) \in  \mathbb{V}$, we define $n_{j,u}(\vec{i}, \vec{s})$ as the number of copies of
 the terms $|\epsilon_{j,u}\rangle$ it contains: these quantities of course provide a partition of $n$, i.e.
 $\sum_{j\in \mathbb{S}} \sum_{u=0}^{\delta_j-1} n_{j,u}(\vec{i}, \vec{s})= n$.
 Observe also that 
 \begin{lemma}\label{lemma220}
 Given $n> K_\mathbb{S}$, for each $(\vec{i}, \vec{s})\in \mathbb{V}$ 
  there exists $j^{\star}\in  \mathbb{S}$ and $u^\star \in \{0,1,\cdots, \delta_{j^\star} - 1\}$ such that  $n_{j^{\star}, u^{\star}}(\vec{i}, \vec{s}) \geq K_{j^{\star}}$. 
\end{lemma}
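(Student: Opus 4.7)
\medskip

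\noindent\textbf{Proof plan.} My plan is a one-shot pigeonhole argument on the occupation counts $n_{j,u}(\vec{i},\vec{s})$. The starting point is the basic fact that, for any $(\vec{i},\vec{s})\in\mathbb{V}$, the numbers $n_{j,u}(\vec{i},\vec{s})$ (with $j\in\mathbb{S}$ and $0\le u<\overline{d}_j$) form a partition of $n$, i.e.
\begin{equation}
 \sum_{j\in\mathbb{S}}\sum_{u=0}^{\overline{d}_j-1} n_{j,u}(\vec{i},\vec{s}) = n \;,
\end{equation}
as noted just above the Lemma. The strategy is then to assume the contrary of the claim and show that this sum cannot exceed $K_{\mathbb{S}}$, contradicting the hypothesis $n>K_{\mathbb{S}}$.

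\medskip

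\noindent Concretely, the first step is to suppose, for contradiction, that no such pair $(j^\star,u^\star)$ exists. Then for every $j\in\mathbb{S}$ and every $u\in\{0,1,\ldots,\overline{d}_j-1\}$ we must have $n_{j,u}(\vec{i},\vec{s})\le K_j-1$ (since the $n_{j,u}$ are non-negative integers). The second step is to substitute this uniform bound into the partition identity above, yielding
\begin{equation}
 n \;\le\; \sum_{j\in\mathbb{S}}\sum_{u=0}^{\overline{d}_j-1} (K_j-1) \;=\; \sum_{j\in\mathbb{S}} \overline{d}_j\,(K_j-1) \;=\; K_{\mathbb{S}}\;,
\end{equation}
by the very definition~(\ref{def_Ks}) of $K_{\mathbb{S}}$. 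This contradicts the hypothesis $n>K_{\mathbb{S}}$, so the assumption fails and the thesis follows.

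\medskip

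\noindent\textbf{Expected obstacles.} Honestly, I do not foresee a substantive obstacle: the statement is purely combinatorial and, once the correct counting identity is written, it reduces to a one-line pigeonhole. The only delicate point is notational hygiene; in particular, the proof uses the index $u$ running over $\{0,\ldots,\overline{d}_j-1\}$ exactly as in the definition~(\ref{def_Ks}) of $K_{\mathbb{S}}$, so that the upper bound $\sum_j \overline{d}_j(K_j-1)$ matches $K_{\mathbb{S}}$ on the nose. The specific structure of the $K_i$'s from~(\ref{def_Ki}) and the role of $M_{\mathbb{S}}$ do not enter here—they will only matter in the subsequent construction of the deterministic unitary protocol, where one needs $n_{j^\star,u^\star}\ge K_{j^\star}$ copies of a single basis state $|\epsilon_{j^\star,u^\star}\rangle$ in order to engineer the $M_{\mathbb{S}}/m_{j^\star}$-to-$1$ energy rearrangement. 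For the present Lemma, only the defining identity~(\ref{def_Ks}) is needed.
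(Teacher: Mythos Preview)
Your proof is correct and follows essentially the same approach as the paper: a one-line pigeonhole/contradiction argument using the partition identity $\sum_{j,u} n_{j,u}=n$ and the definition of $K_{\mathbb{S}}$. The only difference is cosmetic (you write $\overline{d}_j$ where the paper's proof writes $\delta_j$, but these denote the same block dimensions).
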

\begin{proof} Assume  by contraddiction that 
all the $n_{j,u}(\vec{i}, \vec{s})$'s are smaller than the corresponding $K_j$'s. Then we can write 
\begin{eqnarray} 
n = \sum_{j\in \mathbb{S}} \sum_{u=0}^{\delta_j - 1} n_{j,u}(\vec{i}, \vec{s})&\leq& \sum_{j\in \mathbb{S}}  \sum_{u=0}^{\delta_j - 1} (K_j -1) \nonumber \\
&=& \sum_{j\in \mathbb{S}} \delta_j (K_j -1) = K_\mathbb{S} \;,
\end{eqnarray} 
which is impossibile. \end{proof}
As a consequence of the above result it follows that as long as $n\geq K_\mathbb{S}+1$, then
for each $( \vec{i}, \vec{s} ) \in \mathbb{V}$ we can  assign the quantities
\begin{eqnarray}\begin{cases}
{j}_{\star}(\vec{i}) &\;:= \;\;\min \left\{ j_\star\in \mathbb{S} : \exists u \mbox{ s.t. }n_{j_\star, u}(\vec{i}, \vec{s})\geq K_{j_\star}\right\}\;,\\
{u}_{\star}(\vec{i}, \vec{s}) &\;:=\;\;\min \left\{ u_\star : n_{j_\star(\vec{i}), u_\star}(\vec{i}, \vec{s})\geq K_{j_\star}\right\}\;,\\
K_{\star}(\vec{i}) &\;:= \;\; K_{j={j}_{\star}(\vec{i})}\;.\end{cases} \nonumber 
\end{eqnarray} 
The set  $\mathbb{V}$ can  hence be divided into a collection of disjoint subsets 
which contain vectors $( \vec{i}, \vec{s} )$ that have the same values of ${j}_{\star}(\vec{i})$ and ${u}_{\star}(\vec{i}, \vec{s})$ (and hence the same 
$K_{\star}(\vec{i})$), i.e. 
\begin{eqnarray} 
\begin{cases} \mathbb{V} \;&:= \;\; {\bigcup}_{a\in \mathbb{S}} {\bigcup}_{0 \leq b < \delta_a} \mathbb{V}_{a,b}\;, \\
\mathbb{V}_{a,b} \;&:= \;\;  \{ ( \vec{i}, \vec{s} ) \in \mathbb{V}: {j}_{\star}(\vec{i})  = a , \; {u}_{\star}(\vec{i}, \vec{s})  = b \}\;.
\end{cases} \end{eqnarray} 
By construction the couples of vectors included in $\mathbb{V}_{a,b}$ possess at least $K_a=
 \frac{M_\mathbb{S}}{m_a} + \delta_a - 1$ copies of 
the symbol ${(a,b)}$. 
 For each $\mathbb{V}_{a,b}$ we can hence assign a new set of couples $n$-dimensional vectors $\overline{\mathbb{V}}_{a,b}$ 
whose elements are obtained by taking the vectors of 
${\mathbb{V}}_{a,b}$, and replacing $M_\mathbb{S} / m_a$ copies of the entry $(a,b)$ with $(0,0)$. Since for each $( \vec{i}, \vec{s} ) \in {\mathbb{V}}_{a,b}$ there are at least ${K_a \choose M_\mathbb{S} / m_a} = {M_\mathbb{S} / m_a + \delta_a -1 \choose M_\mathbb{S} / m_a}$ ways to do this, the size of $\overline{\mathbb{V}}_{a,b}$
must satisfy
\begin{eqnarray}
|\overline{\mathbb{V}}_{a,b}| \geq {M_\mathbb{S} / m_a + \delta_a -1 \choose M_\mathbb{S} / m_a}  |{\mathbb{V}}_{a,b}|  \geq \delta_a |{\mathbb{V}}_{a,b}| \;,
\end{eqnarray}
meaning that for each ${\mathbb{V}}_{a,b}$ we can identify a subset $\tilde{\mathbb{V}}_{a,b} \subseteq \overline{\mathbb{V}}_{a,b}$ whose cardinality is exactly $|\tilde{\mathbb{V}}_{a,b}| = \delta_a |{\mathbb{V}}_{a,b}|$.
Now let $\mathbb{V}_{a} := \bigcup_{b=0}^{\delta_a-1} {\mathbb{V}}_{a,b}$ and $\overline{\mathbb{V}}_{a} := \bigcup_{b=0}^{\delta_a-1} \overline{\mathbb{V}}_{a,b}$. The sets ${\mathbb{V}}_{a,b}$ are by construction disjoint, and therefore 
\begin{eqnarray}
|\mathbb{V}_{a}| = \sum_{b=0}^{\delta_a-1} |{\mathbb{V}}_{a,b}| \; . 
\end{eqnarray}
The size of the set $\overline{\mathbb{V}}_{a}$ satisfies instead the inequality 
\begin{eqnarray}
 |\overline{\mathbb{V}}_{a}| &=&  \left| \bigcup_{b=0}^{\delta_a-1} \overline{\mathbb{V}}_{a,b}\right| \geq \left| \bigcup_{b=0}^{\delta_a-1} \tilde{\mathbb{V}}_{a,b}\right| 
  \geq \max_b |\tilde{\mathbb{V}}_{a,b}| \nonumber \\ &=& \delta_a  \max_b |{\mathbb{V}}_{a,b}|   \geq \sum_{b=0}^{\delta_a-1} |{\mathbb{V}}_{a,b}| =  |\mathbb{V}_{a}|  \; .
\end{eqnarray}

Recall that the elements of $\overline{\mathbb{V}}_{a,b}$ are characterized by $M_\mathbb{S} / K_a$ copies of the ground state. From Eq.~(\ref{mmm}) it follows that the sets $\overline{\mathbb{V}}_{a}$ do not overlap, i.e.
\begin{eqnarray} 
  \overline{\mathbb{V}}_{a} \cap\overline{\mathbb{V}}_{a'}= \varnothing\;, \quad \forall a \neq a' \in \mathbb{S} \;.
  \end{eqnarray} 
  Accordingly we can  identify a  mapping ${\cal F}$ from 
  ${\mathbb{V}}={\bigcup}_{a\in \mathbb{S}} {\mathbb{V}}_a$ to ${\bigcup}_{a\in \mathbb{S}}  \overline{\mathbb{V}}_a$
which for all $a$ sends ${\mathbb{V}}_a$ into a  subset of $\overline{\mathbb{V}}_a$,
\begin{eqnarray} \label{mapping1} 
(\vec{i},\vec{s}) \in {\mathbb{V}}_a \mapsto {\cal F}(\vec{i},\vec{s})\in \overline{\mathbb{V}}_a \;,
\end{eqnarray} 
which is injective, i.e. such that $ {\cal F}(\vec{i},\vec{s})\neq  {\cal F}(\vec{i}',\vec{s}')$ for all
$(\vec{i},\vec{s}) \neq (\vec{i}',\vec{s}')$.
From this we can now derive the following lower bound for the MDEW ratio, 
\begin{eqnarray}\label{lowerlower} 
{\cal R}({\cal A};\hat{H})\geq  E \frac{M_{\mathbb{S}}}{K_{\mathbb{S}}+1}>0\;. 
\end{eqnarray} 
The proof relays on the observation that for $n \geq  K_{\mathbb{S}}+1$ there exists a unitary transformation $\hat{U}_{\cal F}^{(n)}$ which 
enables us to extract an amount $W=EM_{\mathbb{S}}$ of work deterministically. On the eigenvectors $|\epsilon_{\vec{i},\vec{s}}\rangle$
which form the support of $\hat{\omega}^{\otimes n}_{\cal A}(0)$ such unitary is simply the transformation which implement 
the mappings~(\ref{mapping1}) defined above, i.e. 
\begin{eqnarray} 
 \hat{U}_{\cal F}^{(n)} |{\epsilon}_{\vec{i},\vec{s}} \rangle =  |\vec{\epsilon}_{{\cal F}({\vec{i},\vec{s}})}\rangle\;, \qquad \forall(\vec{i},\vec{s})  \in \mathbb{V} \;.
\end{eqnarray} 
Due to the fact that  for $(\vec{i},\vec{s})\in\mathbb{V}_a$, the states
$|{\epsilon}_{(\vec{i},\vec{s})} \rangle$ and $|\vec{\epsilon}_{{\cal F}(\vec{i},\vec{s})}\rangle$ only differs by the fact that in the latter 
$K_a$ copies of eigenvectors with energy eigenvalue $\epsilon_a$ are replaced with a ground state vector, the associated energy gain for each one for all these transitions  is equal to
\begin{eqnarray} 
W = K_a \epsilon_a =  \frac{M_\mathbb{S}}{m_a} E m_a= E M_\mathbb{S}\;.
\end{eqnarray} 
Accordingly for $\hat{\rho}$ with support ${\cal H}_{\mathbb{S}}$ we can write 
\begin{eqnarray} \label{orbd} 
&&\left\{ \begin{array}{l} \langle W_ {\hat{U}_{\cal F}^{(n)}}(\hat{\omega}^{\otimes n}_{\cal A}(0); \hat{H}^{(n)}) \rangle = E M_\mathbb{S}\;,\\
\langle \Delta^2W_ {\hat{U}_{\cal F}^{(n)}}(\hat{\omega}^{\otimes n}_{\cal A}(0); \hat{H}^{(n)}) \rangle = 0\;,\end{array} \right.  \\ \nonumber
&&\qquad \qquad \qquad \qquad \qquad \Longrightarrow 
{\cal R}_n({\cal A};\hat{H})\geq \frac{E M_\mathbb{S} }{n}\;,
\end{eqnarray} 
The maximum of the above expression  is achieved for $n=K_{\mathbb{S}}+1$ which via~(\ref{impo11}) finally leads to (\ref{lowerlower}). 

In the special case in which the spectrum of the Hamiltonian is non-degenerate (i.e.  $d_i=\overline{d}_i=1 \implies K_i = M_{\mathbb{S}}/m_i$ for all $i$), using in~(\ref{lowerlower}) the definitions~(\ref{def_Ms}),~(\ref{def_Ki}) and~(\ref{def_Ks}), we can recast the lower bound~(\ref{lowerlower}) in a 
slightly weaker form which unveils a more straightforward and useful dependence on the energy levels $\epsilon_i$, i.e. 
\begin{eqnarray}
\label{harmonic_mean}
{\cal R}({\cal A};\hat{H})\geq E \left( \sum_{i \in \mathbb{S} } \frac{1}{m_i} \right)^{-1}   =\left( \sum_{i \in \mathbb{S}} \frac{1}{\epsilon_i} \right)^{-1}\;.
\end{eqnarray}
The difference between the bounds~(\ref{harmonic_mean}) and~(\ref{lowerlower}) becomes negligible when all the $m_i$ satisfy $m_i \gg d$.

\subsection{Finite size behaviour}
From the definition~(\ref{def_Rn}) and from~(\ref{defn}) it follows that one always has
\begin{eqnarray}
{\cal R}_{kn}({\cal A};\hat{H}) \geq {\cal R}_{n}({\cal A};\hat{H}) \quad \forall k,n \in \mathbb{N}\;. 
\end{eqnarray}
Combining the above inequality with~(\ref{orbd}) we have 
\begin{eqnarray}
{\cal R}_n({\cal A};\hat{H})\geq \left\lfloor \frac{n}{K_{\mathbb{S}}+1} \right\rfloor\frac{E M_\mathbb{S} }{n} \quad \forall n > K_{\mathbb{S}} \; ;
\end{eqnarray}
which also implies that, for every $\frac{1}{2} \leq c < 1$, we can write
\begin{eqnarray}
{\cal R}_n({\cal A};\hat{H})\geq c \frac{E M_\mathbb{S} }{K_{\mathbb{S}} + 1}  \quad \forall n > \tfrac{c}{1-c} K_{\mathbb{S}} \; .
\end{eqnarray}
As in the case of~(\ref{harmonic_mean}), for non degenerate spectra this can also be casted in the weaker (yet simpler) form  
\begin{eqnarray}
{\cal R}_n({\cal A};\hat{H}) \geq c \left( \sum_{i \in \mathbb{S} } \frac{1}{\epsilon_i} \right)^{-1}\;,   \qquad \forall n > \tfrac{c}{1-c} K_{\mathbb{S}} \; .
\label{bound_mi2}
\end{eqnarray}

\section{Generic spectra}
\label{sec:spettri_irrazionali}

\begin{figure}
	\includegraphics[width=\columnwidth]{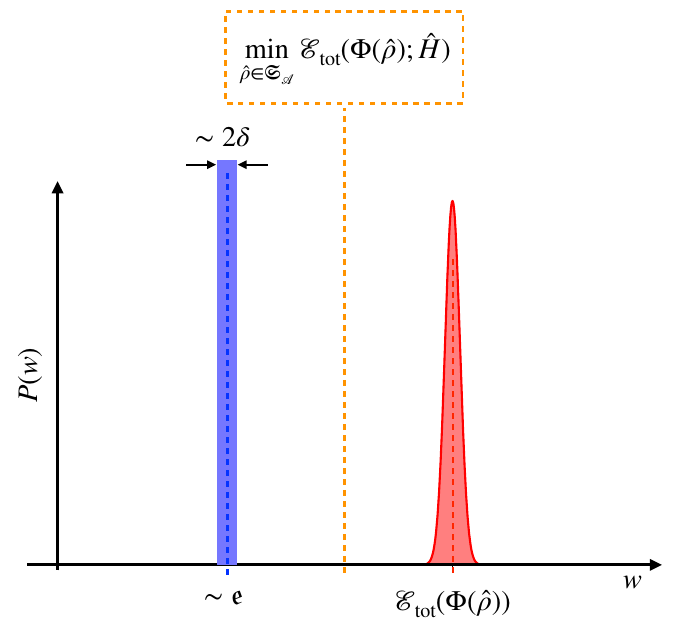}
	\caption{Comparison between the bounded fluctuation protocol presented here and the collective protocol of Ref.~\cite{Perarnau2018}. The typicality-based protocol of~\cite{Perarnau2018} can extract the maximal energy $\mathcal{E}_{\mbox{tot}}(\Phi(\hat\rho))$, with exponentially suppressed fluctuations. In contrast, the protocol discussed in this section can extract an energy $\sim\mathfrak{e} < \mathcal{E}_{\rm tot}(\Phi(\hat\rho))$, but with the guarantee that the work fluctuation never exceeds 4$\delta$. Protocols with bounded fluctuation may exist for higher value of the mean extracted work, up until the upper bound~\ref{newbound11}, which is always smaller or equal than $\mathcal{E}_{\rm tot}(\Phi(\hat\rho))$ (with the equality holding only in the case in which $\hat\rho = \hat\omega_{\mathcal{A}}(\beta^\star)$). }
	\label{w_illustration}
\end{figure}

In this section we are going to show that, by approximating the spectrum of a generic Hamiltonian to rational level, we can construct a work extraction protocol with bounded fluctuations. 
\begin{lemma}
\label{lemma:Varquasi0}
Let $\ham = \sum_{j=0}^{M-1} \epsilon_{j}  \hat{\Pi}_j$ and $\ham^\prime = \sum_{j=0}^{M-1} \epsilon^\prime_{j}  \hat{\Pi}_j$ be two Hamiltonians 
of the form~(\ref{defHAM}) characterized by the same degeneracies values and whose associated  eigenvalues  differ at most by a  constant $\delta \geq 0$, i.e. 
\begin{eqnarray}
\left\lvert \epsilon'_j - \epsilon_j \right\rvert < \delta \;, \qquad \forall j\in \{ 0, \cdots, M-1\}\;. 
\label{energie_vicine}
\end{eqnarray} 
Suppose now that  $\ham^\prime$ admits an unitary evolution $\hat{U}$ that 
permits to extract a deterministic work value $W'\geq 0$ for 
 the 
 input state $\hat{\rho}$, i.e.
 ${P}^{(\hat{H}')}_{\hat{\rho}; \hat{U}} (w) = \delta(w-W')$. 
 Then  
 using  $\hat{U}$ when the system 
  Hamiltonian is $\ham$, yields an average work extraction value 
  \begin{eqnarray} \label{IMPLI11111} 
|\langle W_{\hat{U}}(\hat{\rho}; \hat{H}) \rangle - W'| \leq 2\delta \;, 
\end{eqnarray} 
and a
  probability distribution of the extracted work 
  ${P}^{(\hat{H})}_{\hat{\rho}; \hat{U}} (w)$, which 
   is null whenever the distance of $w$ from $W'$ is larger than $2\delta$, i.e. 
\begin{equation}\label{distrinew} 
{P}^{(\hat{H})}_{\hat{\rho}; \hat{U}} \left( \left\lvert w - W'\right\rvert > 2\delta \right ) :=
1- \int^{W'+2\delta}_{W'-2\delta} d w {P}^{(\hat{H})}_{\hat{\rho}; \hat{U}}(w)= 
 0\;.
\end{equation}
\end{lemma}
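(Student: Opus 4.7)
The key observation I would exploit is that the two Hamiltonians $\ham$ and $\ham'$ share the same energy projectors $\hat{\Pi}_j$ (only the eigenvalues differ). This implies that the joint probability distribution over the pair of outcomes $(i,j)$ of the two projective measurements, i.e.\ the quantity $P_{\hat{\rho};\hat{U}}(j,i) = \mathrm{Tr}[\hat{\Pi}_j \hat{U} \hat{\Pi}_i \hat{\rho} \hat{\Pi}_i \hat{U}^{\dag}]$ defined in Eq.~(\ref{probabilities_conproiettori}), is \emph{identical} whether the underlying Hamiltonian is $\ham$ or $\ham'$. Only the mapping $(i,j)\mapsto w$ that assigns a work value to each outcome changes, since under $\ham$ the work is $\epsilon_i - \epsilon_j$ while under $\ham'$ it is $\epsilon'_i - \epsilon'_j$.

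The plan is then as follows. First, I would unpack the deterministic-work hypothesis via Eq.~(\ref{def20}): there exists a map $\mu:\mathbb{S}[\hat{\rho}]\to\{0,\ldots,M-1\}$ with $\epsilon'_i-\epsilon'_{\mu(i)}=W'$ and $P_{\hat{U}}(\mu(i)|\hat{\rho}_i)=1$ for every $i\in\mathbb{S}[\hat{\rho}]$. Consequently $P_{\hat{\rho};\hat{U}}(j,i)=0$ whenever $j\neq\mu(i)$, and this vanishing property persists when the Hamiltonian is changed to $\ham$ (the projectors are unchanged). Second, for the support of the new distribution $P^{(\ham)}_{\hat{\rho};\hat{U}}$, I would observe that the only work values that can occur with nonzero probability are of the form $w=\epsilon_i-\epsilon_{\mu(i)}$ with $i\in\mathbb{S}[\hat{\rho}]$, and apply the triangle inequality:
\begin{equation}
|\epsilon_i-\epsilon_{\mu(i)}-W'| = |(\epsilon_i-\epsilon'_i)-(\epsilon_{\mu(i)}-\epsilon'_{\mu(i)})| \leq 2\delta,
\end{equation}
where the bound (\ref{energie_vicine}) is used twice. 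This immediately yields~(\ref{distrinew}).

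Third, for the average work~(\ref{IMPLI11111}), I would use the fact that $\langle W_{\hat{U}}(\hat{\rho};\ham)\rangle$ is a convex combination of the values $\epsilon_i-\epsilon_{\mu(i)}$ with weights $P(i|\hat{\rho})$. Since every such value lies in the interval $[W'-2\delta,W'+2\delta]$, the mean lies there as well, giving $|\langle W_{\hat{U}}(\hat{\rho};\ham)\rangle-W'|\leq 2\delta$. Alternatively one can argue directly from~(\ref{distrinew}) by bounding $|\langle W\rangle - W'|\leq\int dw\,|w-W'|P^{(\ham)}_{\hat{\rho};\hat{U}}(w)\leq 2\delta$.

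There is no real obstacle: the lemma reduces to the triangle inequality once one recognizes that the TPM transition probabilities are functions only of the projectors $\hat{\Pi}_j$ (not of the eigenvalues), so the support of the work distribution transforms in a controlled way under perturbations of the spectrum. The only subtlety worth spelling out is that the deterministic condition must be used in its strong form~(\ref{importantIMPP}), ensuring that only the diagonal pairs $(i,\mu(i))$ carry weight, so that no ``wild'' outcome with $|w-W'|>2\delta$ can sneak in through an off-diagonal transition.
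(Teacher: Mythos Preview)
Your proposal is correct and follows essentially the same approach as the paper: both arguments invoke the deterministic-extraction characterization~(\ref{def20}) to pin down that $P_{\hat U}(j|\hat\rho_i)$ is supported on $j=\mu(i)$ with $\epsilon'_i-\epsilon'_{\mu(i)}=W'$, note that these transition probabilities are independent of the eigenvalues (they depend only on the shared projectors $\hat\Pi_j$), and then use $|\epsilon_j-\epsilon'_j|<\delta$ twice to bound each surviving work value $\epsilon_i-\epsilon_{\mu(i)}$ within $2\delta$ of $W'$. Your write-up is in fact somewhat more explicit than the paper's, spelling out the triangle-inequality step and the convex-combination argument for the mean.
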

\begin{proof}
Invoking the condition~(\ref{def20}) we know that $\hat{U}$ applied to $\hat{\rho}$, given  
$i\in  {\mathbb{S}}[\hat{\rho}]$ 
the transition probabilities $P_{\hat{U}}(j| \hat{\rho}_i)= \mbox{Tr} [\hat{\Pi}_j\hat{U}  \hat{\rho}_i\hat{U}^\dag]$ is equal to one for $j$ such that  $\epsilon'_i-\epsilon'_j=W'$, and
zero otherwise. Observe hence  that according to~(\ref{energie_vicine}), when working with  the Hamiltonian $\hat{H}$ the same mapping will assign
probability  equal to 1  to  energy jumps $\epsilon_i\mapsto \epsilon_j$ which  fulfils the inequality
$|\epsilon_i-\epsilon_j - W'| \leq 2 \delta$, and zero otherwise
\begin{eqnarray} 
P_{\hat{U}}(j| \hat{\rho}_i)= \left\{ \begin{array}{ll} 
1 & \mbox{for} \; |\epsilon_j-\epsilon_i -W'| \leq  2\delta\;,  \\
0 & \mbox{otherwise} \;.
\end{array} \right. 
\end{eqnarray} 
The thesis finally follows by replacing the above identity in Eqs.~(\ref{ddfaver}) and (\ref{distribution}). 
\end{proof}
{\bf Remark:} 
Observe that the above result can also be generalized to situations in which $\hat{H}$ and $\hat{H}'$ commute by have different degeneracies. In particular 
consider the case where $\hat{H}'$ is non-degenerate with eigenvalues $\epsilon'_{0}, \cdots , \epsilon'_{d-1}$, while $\hat{H}$ has only $M < d$  distinct eigenvalues 
$\epsilon_{0}, \cdots , \epsilon_{M-1}$. Assume now that 
we can organize the element of the spectrum of $\hat{H}'$ into $M$ groups 
$\mathbb{G}'_0, \mathbb{G}'_1, \cdots, \mathbb{G}'_{M-1}$ 
such that  
\begin{eqnarray}
\left\lvert \epsilon' - \epsilon_j \right\rvert < \delta \;, \qquad \forall j\in \{ 0, \cdots, M-1\}\;, \forall \epsilon'\in \mathbb{G}'_j\;.
\label{energie_vicine19}
\end{eqnarray} 
Then following the same derivation given in the Lemma can be used to show that 
if  ${P}^{(\hat{H}')}_{\hat{\rho}; \hat{U}} (w) = \delta(w-W')$ for some $\hat{U}$ and $W'$, then Eqs.~(\ref{IMPLI11111}) and~(\ref{distrinew}) still applies.

\begin{lemma}
Given a generic Hamiltonian $\ham = \sum_{i=0}^{M-1} \epsilon_{i}  \hat{\Pi}_i$ and 
${\cal A}:= \bigoplus_{i=0}^{M-1}{\cal A}_{i}$
a direct sum of subsets of its energy eigenspaces, let
\begin{eqnarray}
\mathfrak{e} := \left( \sum_{i \in \mathbb{S}} \frac{\overline{d}_i}{\epsilon_i} \right)^{-1}\;,
\end{eqnarray}
with $\mathbb{S}$ the non-empty elements set~(\ref{defroho}) of ${\cal A}$ and 
with $\overline{d}_i=\mbox{\rm dim}[ {\cal A}_i]$ the dimension of its $i$-th energy block.
Then for each  $\hat{\rho}\in {\mathfrak{S}}_{\cal{A}}$ and 
 $c \in [0,1[$, we can identify a positive constant $A$ with the property that, for each $\delta > 0$ sufficiently small, given  $n > A\delta^{-|\mathbb{S}| + 1}$ copies of $\hat{\rho}$, we can find a TPM protocol acting on $\hat{\rho}^{\otimes n}$ such that
\begin{eqnarray}
&&W:={\langle W_{\hat{U}}(\hat{\rho}^{\otimes n}; \hat{H}^{(n)}/n)\rangle} \geq c\; \mathfrak{e} -2 \delta  \;, \label{condizione_Media} \\ 
&&{P}^{(\hat{H}^{(n)}/n)}_{\hat{\rho}^{\otimes n}; \hat{U}}  \left( \left\lvert w - W \right\rvert > 4\delta \right)= 0 \;.  \label{condizione_distr}
\end{eqnarray}
\end{lemma}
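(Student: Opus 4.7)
The plan is to approximate $\ham$ by a Hamiltonian $\ham'$ with commensurable spectrum, apply the rational-spectrum construction of Sec.~\ref{sec:rational_spectra} to obtain a deterministic TPM protocol for $\ham'$, and then transfer this protocol to the original Hamiltonian through Lemma~\ref{lemma:Varquasi0}. Throughout I would rely on Corollary~\ref{cor1} so that a single choice of unitary handles the whole family $\mathfrak{S}_{\mathcal{A}}$ at once.

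\textbf{Rational approximation and deterministic protocol.} Given $\delta>0$, choose a unit $E\leq\delta$ and set $\epsilon'_i:=E\lfloor \epsilon_i/E\rfloor$. This produces a Hamiltonian $\ham'$ sharing the eigenprojectors of $\ham$, with eigenvalues $\epsilon'_i=E m_i$ satisfying $|\epsilon'_i-\epsilon_i|<\delta$. For $\delta$ small enough none of the $m_i$ with $i\in\mathbb{S}$ vanishes, so $\mathbb{S}$ is preserved and the hypotheses of Sec.~\ref{sec:rational_spectra} are met. Define $M_{\mathbb{S}}$, $K_i$, $K_{\mathbb{S}}$ as in~(\ref{def_Ms})--(\ref{def_Ks}), and fix any $c^\star\in(c,1)$. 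The finite-size version of~(\ref{lowerlower}) derived at the end of Sec.~\ref{sec:rational_spectra} then provides, for every $n>\tfrac{c^\star}{1-c^\star}K_{\mathbb{S}}$, a unitary $\hat{U}$ acting on $n$ copies that deterministically extracts from every $\hat\rho^{\otimes n}$ (with $\hat\rho\in\mathfrak{S}_{\mathcal{A}}$) the per-copy value
\[
W'\;\geq\;c^\star\,\frac{E M_{\mathbb{S}}}{K_{\mathbb{S}}+1}
\]
under the rescaled Hamiltonian $\ham'^{(n)}/n$.

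\textbf{Rate and scaling estimates.} Expanding $K_i-1=M_{\mathbb{S}}/m_i+\overline{d}_i-2$ and using $M_{\mathbb{S}}/m_i\to\infty$ as $E\to 0$ one finds $E M_{\mathbb{S}}/(K_{\mathbb{S}}+1)=\mathfrak{e}'(1+O(E))$, with $\mathfrak{e}':=(\sum_{i\in\mathbb{S}}\overline{d}_i/\epsilon'_i)^{-1}$; continuity of the map $(\epsilon_i)\mapsto\mathfrak{e}$ away from $0$ then yields $\mathfrak{e}'=\mathfrak{e}+O(\delta)$, so that for $\delta$ small enough $W'\geq c\,\mathfrak{e}$. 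In the worst case $M_{\mathbb{S}}=\operatorname{lcm}\{m_i\}_{i\in\mathbb{S}}$ grows as $\prod_{i\in\mathbb{S}}m_i\sim E^{-|\mathbb{S}|}$, whence $K_{\mathbb{S}}\sim E^{-(|\mathbb{S}|-1)}$; absorbing the prefactors into a single constant $A=A(c,\ham,\mathcal{A})$ reproduces the advertised $n>A\delta^{-|\mathbb{S}|+1}$.

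\textbf{Transfer to $\ham$ and main obstacle.} The Hamiltonians $\ham^{(n)}/n$ and $\ham'^{(n)}/n$ are simultaneously diagonal in the product basis of single-copy energy eigenvectors, and their eigenvalues (averages of $n$ per-copy eigenvalues, each differing by less than $\delta$) still differ pairwise by at most $\delta$. The remark following Lemma~\ref{lemma:Varquasi0} therefore applies with this common $\delta$, giving on the one hand $|W-W'|\leq 2\delta$, which combined with $W'\geq c\,\mathfrak{e}$ yields~(\ref{condizione_Media}), and on the other hand $P^{(\ham^{(n)}/n)}_{\hat\rho^{\otimes n};\hat{U}}(|w-W'|>2\delta)=0$; the triangle inequality $|w-W|\leq|w-W'|+|W'-W|\leq 4\delta$ then produces~(\ref{condizione_distr}). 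The main obstacle I foresee is the careful control of $M_{\mathbb{S}}$ and $K_{\mathbb{S}}$ as $E\to 0$: the crude bound $M_{\mathbb{S}}\leq\prod_i m_i$ is precisely what forces the exponential-in-$|\mathbb{S}|$ dependence of $n$ on $1/\delta$, and one must verify that the choice of $E$ (together with the rounding procedure) does not generate arithmetic pathologies -- for instance accidental collapses of the $m_i$ -- that would spoil the scaling in either direction.
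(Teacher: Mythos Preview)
Your overall strategy---rational approximation of $\ham$, invoke the construction of Sec.~\ref{sec:rational_spectra}, then transfer via Lemma~\ref{lemma:Varquasi0}---is exactly the paper's. The gap is in the rate step. You assert that $M_{\mathbb{S}}/m_i\to\infty$ as $E\to 0$, and from this deduce $E M_{\mathbb{S}}/(K_{\mathbb{S}}+1)=\mathfrak{e}'(1+O(E))$. The first claim is false in general: with $\epsilon_1=1$, $\epsilon_2=2$ and $E=1/N$ one has $m_1=N$, $m_2=2N$, hence $M_{\mathbb{S}}=2N$ and $M_{\mathbb{S}}/m_2=1$ for every $N$. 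With occupancies $\overline{d}_1=\overline{d}_2=2$ this gives $K_{\mathbb{S}}=6$ and $EM_{\mathbb{S}}/(K_{\mathbb{S}}+1)=2/7$, while $\mathfrak{e}'=1/3$; the shortfall does not vanish as $E\to 0$, so $W'\geq c\,\mathfrak{e}$ cannot be concluded. (The obstacle you flag---coincidence of two $m_i$---is a different and milder issue; the real problem is large common factors among the $m_i$, which keep $M_{\mathbb{S}}$ small.)

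The paper sidesteps this not by tuning $E$ but by \emph{lifting the degeneracies}: it builds a non-degenerate $\ham'$ by splitting each level $\epsilon_i$ into $d_i$ nearby sublevels $\epsilon'_{i,j}=(\lfloor d^\star\epsilon_i/\delta\rfloor+j+1)\,\delta/d^\star$. Every occupied sublevel then has occupancy $1$, so the clean non-degenerate bound~(\ref{bound_mi2}) applies directly and yields $\mathcal{R}_n(\mathcal{A};\ham')\geq c\,(\sum_{(i,j)\in\mathbb{S}'}1/\epsilon'_{i,j})^{-1}\geq c\,\mathfrak{e}$, with no arithmetic control on the $m_{i,j}$ required. (The point is that in the non-degenerate case the correction term $1+\sum_i\overline{d}_i(\overline{d}_i-2)=1-|\mathbb{S}'|$ is nonpositive, so~(\ref{harmonic_mean}) follows from~(\ref{lowerlower}) unconditionally.) The transfer to $\ham$ then uses the remark after Lemma~\ref{lemma:Varquasi0}, and the rest---triangle inequality for~(\ref{condizione_distr}), the crude product bound $M_{\mathbb{S}'}\leq\prod m_{i,j}$ for the scaling of $n$---proceeds essentially as you describe.
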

\begin{proof}
Define 
$\{ |\epsilon_{i,j}\rangle : j=0, \cdots, d_i-1\}$ the orthonormal basis of the energy subspace ${\cal H}_i$ of $\hat{H}$ 
constructed by tacking as first $\overline{d}_i$ elements those which define the orthonormal basis of ${\cal A}_i$ introduced in 
Eq.~(\ref{imponew}).   For $\delta >0$ and $i\in\{0,\cdots, M-1\}$ 
 we now introduce
the integer constants
\begin{eqnarray}
m_{i,j} := \left\lfloor \frac{\epsilon_i}{\delta / d^\star} \right\rfloor +
j+1\;, \quad s\in \{ 0, \cdots, d_i-1\}\;, 
\end{eqnarray} 
where $\epsilon_i$ are the eigenvalues of $\hat{H}$ and $d^\star = \max_j d_j$ is its maximum degeneracy.
We hence define 
the Hamiltonian 
\begin{eqnarray} \label{defHHH} 
\ham^\prime :=  \sum_{i=0}^{M-1} \sum_{j=0}^{d_i-1}\epsilon^\prime_{i,j} |\epsilon_{i,j} \rangle \langle \epsilon_{i,j}|\;, \end{eqnarray}
with eigenvalues
\begin{eqnarray} 
\epsilon^\prime_{i,j} := m_{i,j} \frac{\delta}{d^\star}\;.
\end{eqnarray}  which 
by construction  commute with $\hat{H}$ and ${\cal A}$. Notice also that we have 
\begin{eqnarray}\label{newnal} 
\epsilon_i +\delta \geq  \epsilon^\prime_{i,j}  \geq  \epsilon_i\;, \end{eqnarray} 
for all   $i\in \{ 0, \cdots, M-2\}$ and for all $j\in \{0, \cdots, d_i-1\}$. Furthermore if we take
\begin{eqnarray} 
\delta < \min_{i\in \{ 0,\cdots, M-2\}} \left(\epsilon_{i+1}- \epsilon_{i}\right)\;, 
\end{eqnarray} 
 from~(\ref{newnal}) ensures that $\epsilon_{i+1} > \epsilon_{i} +\delta > \epsilon^\prime_{i,j}$ implying that 
 the spectrum of $\ham^\prime$ is non-degenerate.
 Notice also that the subset ${\cal A}$ can be expressed as a direct sum of energy subspaces of $\hat{H}'$ with 
 a non-empty index subset $\mathbb{S}'$ identified by the couples $\{ (i,j) : i\in {\mathbb{S}}, j\in \{ 0, \cdots, \overline{d}_i\}\}$.
Notice also that   the Hamiltonian $\ham^\prime$ falls therefore under the hypotheses of Sec.~\ref{sec:rational_spectra}, and we can invoke~(\ref{bound_mi2}) to deduce that, for each $n > \tfrac{c}{1-c} K_{\mathbb{S}'}$ one has
\begin{eqnarray}
{\cal R}_n({\cal A};\hat{H}^\prime) &\geq& c \left( \sum_{(i, 
j)\in\mathbb{S}' } \frac{1}{\epsilon'_{i,j}} \right)^{-1}  
\nonumber \\ 
&\geq& c  \left( \sum_{(i, 
j)\in\mathbb{S}' } \frac{1}{\epsilon_i} \right)^{-1}
 = 
c\; \mathfrak{e}  \;,
\end{eqnarray}
where in the second inequality we use the left-most part of~(\ref{newnal}). 
This means that  there exists an unitary $\hat{U}$ such that 
it allows us to  deterministically extract a work value lager than or equal to $c \; \mathfrak{e}$ from 
$n > \tfrac{c}{1-c} K_{\mathbb{S}'}$ copies of a generic density matrix 
$\hat{\rho}\in{\mathfrak{S}}_{\cal{A}}$, i.e. 
 \begin{eqnarray}   \left\{ \begin{array}{l} W':=\langle {W_{\hat{U}}(\hat{\rho}^{\otimes n}; \hat{H'}^{(n)}/n)}\rangle  \geq c \; \mathfrak{e} 
 \;,  \\ \langle \Delta^2 W_ {\hat{U}}(\hat{\rho}^{\otimes n}; \hat{H'}^{(n)}/n) \rangle =0\;. 
 \end{array} \right. \label{dfsfsf1001} 
  \end{eqnarray} 
Observe next that the $n$-copy Hamiltonians ${\ham^{(n)}}/n$ and ${\ham^{\prime (n)}}/n$ have eigenvalues 
\begin{eqnarray} \nonumber 
{\epsilon_{\vec{i}}}/{n}&:=& \sum_{s=1}^n 
\frac{\epsilon_{{i}_s}}{n} \;, \quad \vec{i} \in  \{ 0, \cdots, M-1\}^n\;,  \\ 
 {\epsilon^\prime_{\vec{i},\vec{j}}}/{n} &:=& \sum_{s=1}^n 
\frac{\epsilon'_{{i}_s,{j}_s}}{n} \;, \quad \left\{ \begin{array}{ll}\vec{i} \in  \{ 0, \cdots, M-1\}^n\;, \\ 
\vec{j} \in \mathbb{D}_{\vec{i}} \;, \end{array} \right. \nonumber 
\end{eqnarray} 
 with $\mathbb{D}_{\vec{i}}$ defined implicitly by (\ref{defHHH}), which  
  satisfy the condition~(\ref{energie_vicine19}). Indeed identifying 
 $\mathbb{G}'_{\vec{i}}$ as the set formed by the  elements ${\epsilon^\prime_{\vec{i},\vec{j}}}/{n}$ with $\vec{j}\in \mathbb{D}_{\vec{i}}$, 
 from Eq.~(\ref{newnal}) 
 we get: 
\begin{equation}
\left\lvert \frac{\epsilon^\prime_{\vec{i},\vec{j}}}{n} - \frac{\epsilon_{\vec{i}}}{n} \right\rvert \leq \frac{1}{n} \sum_{s=1}^n \left\lvert \epsilon^\prime_{i_s,j_s} - \epsilon_{i_s} \right\rvert \leq \frac{1}{n} \sum_{s=1}^n \delta = \delta \; .
\end{equation}
Applying  the identity~(\ref{IMPLI11111}) of Lemma~\ref{lemma:Varquasi0} to the pair of Hamiltonians $\ham^{(n)}$  and $\ham^{\prime (n)}$ ensures therefore that the same unitary  $\hat{U}$ that realizes~(\ref{dfsfsf1001}), also 
fulfils
\begin{eqnarray} 
|W-W'| \leq 2\delta \;, \label{ddsdsds} 
\end{eqnarray} 
that implies 
~(\ref{condizione_Media}). 
Similarly from~(\ref{distrinew}) of the Lemma we get 
\begin{eqnarray}
0&=&{P}^{(\hat{H}^{(n)}/n)}_{\hat{\rho}^{\otimes n}; \hat{U}}  \left( \left\lvert w - W' \right\rvert > 2\delta \right) \nonumber \\
&=& {P}^{(\hat{H}^{(n)}/n)}_{\hat{\rho}^{\otimes n}; \hat{U}}  \left( \left\lvert (w - W)+( W' -W)\right\rvert > 2\delta \right) \;, \nonumber 
\end{eqnarray} 
which together with (\ref{ddsdsds}) leads to 
~(\ref{condizione_distr}).
To complete the proof, we observe that the constant $K_{\mathbb{S}'}$ can be bounded with
\begin{eqnarray}
K_{\mathbb{S}'} &<& \sum_{i ,j} \frac{M_{\mathbb{S}'}}{m_{i,j}} < \prod_{i,j} m_{i,j} \sum_{i,j} \frac{1}{m_{i,j}} \\ \nonumber 
&<&|\mathbb{S}'| \left( \max_{i,j} m_{i,j} \right)^{|\mathbb{S}'| - 1} \leq |\mathbb{S}'| \left( \frac{d^\star \epsilon_{d-1}}{\delta} +d^\star \right)^{|\mathbb{S}'| - 1} \nonumber \\
&\leq &d^{\star}  |\mathbb{S}  | \left( \frac{d^\star \epsilon_{d-1}}{\delta} +d^\star \right)^{d^{\star} |\mathbb{S}| - 1} \nonumber 
\\
&= &d^{\star}  |\mathbb{S}  | \left({d^\star \epsilon_{d-1}}+d^\star \delta  \right)^{d^{\star} |\mathbb{S}| - 1} \; \delta^{-d^{\star} |\mathbb{S}| + 1}\;,
\end{eqnarray}
where in the third line we invoke the monotonicity under $|\mathbb{S}'|$ and the inequality 
$|\mathbb{S}'|\leq d^{\star}|\mathbb{S}|$. To identify  the constant $A$ finally observe that for sufficiently small $\delta$ we can also write 
\begin{eqnarray}
K_{\mathbb{S}'} < d^{\star}  |\mathbb{S}  | \left({d^\star \epsilon_{d-1}}+1  \right)^{d^{\star} |\mathbb{S}| - 1} \; \delta^{-d^{\star} |\mathbb{S}| + 1}\;,
\end{eqnarray}
which gives the thesis by taking  
\begin{eqnarray} A= \tfrac{c}{1-c} d^{\star}  |\mathbb{S}  | \left({d^\star \epsilon_{d-1}}+1  \right)^{d^{\star} |\mathbb{S}| - 1} \;.\end{eqnarray} 
\end{proof}

\section{A semi-heuristic estimation based on the Central Limit Theorem}
\label{sec:CLT_estimation}

\begin{figure}
	\includegraphics[width=\columnwidth]{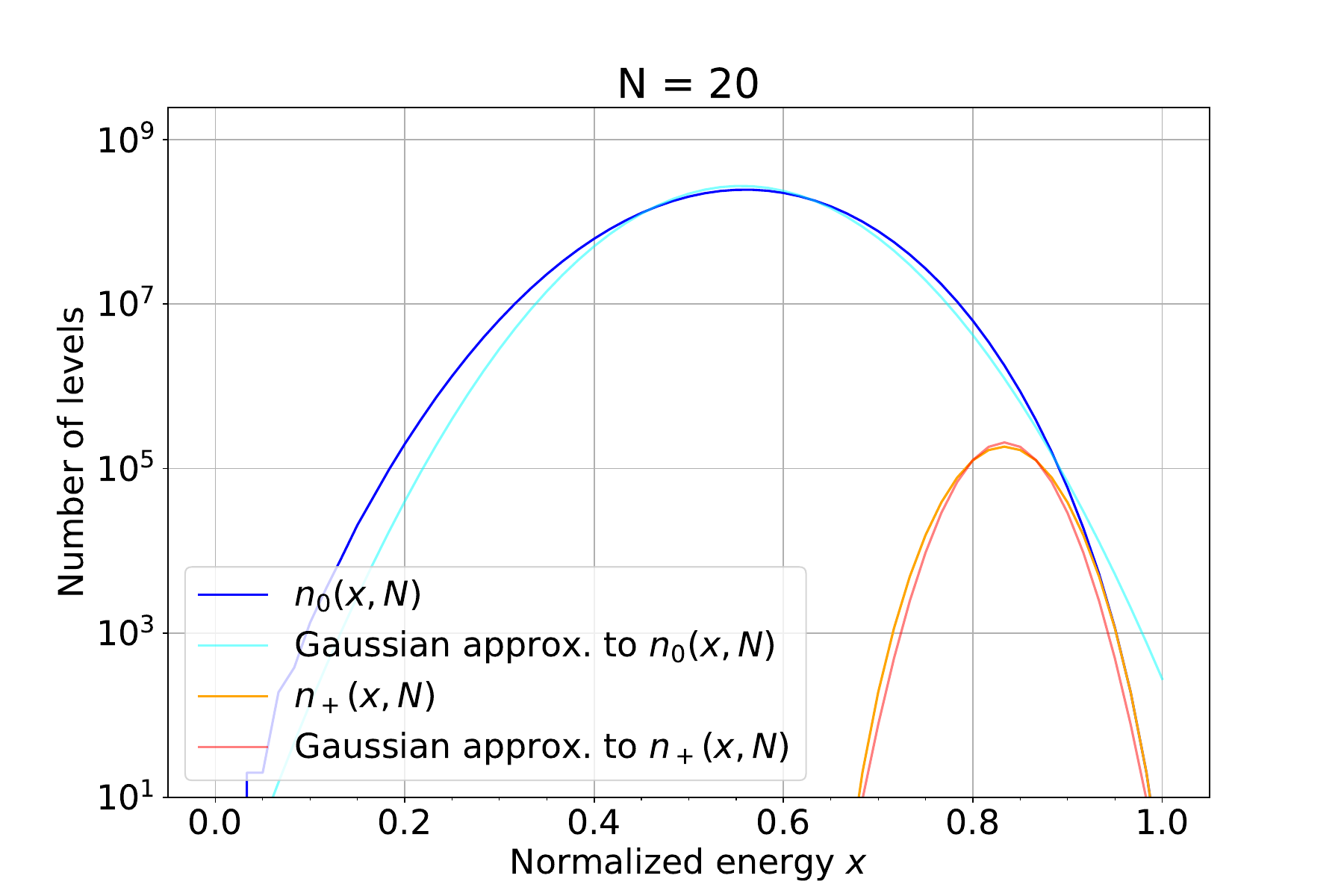}
	\caption{The distribution of energy levels~(\ref{n_zero}) and~(\ref{n_plus}),  compared with their Gaussian estimations~(\ref{n_zero_gaussian}) and~(\ref{n_plus_gaussian}), for $n=20$ copies of the 3-level Hamiltonian $\ham = \frac{2}{3}\ket{1}\bra{1} + \ket{2}\bra{2}$, when $\mathcal{A} = \mbox{Span}\{ \ket{1}, \ket{2} \}$. }
	\label{fig:gaussiane}
\end{figure}

Let $\ham$ be an Hamiltonian satisfying the hypotheses of Sec.~\ref{sec:rational_spectra}. We define the integer quantities $\mathfrak{n}_0(x, n)$ and $\mathfrak{n}_+(x, n)$ as the number of energy levels  with energy equal to $nx$ in, respectively, $\hat{H}^{(n)}$ and $\mathcal{A}^{\otimes n}$. Explicitly,
\begin{eqnarray} 
\label{n_zero}
\!\!\mathfrak{n}_0(x, n) &:=& \#\left\{ \vec{i} \in [0,d-1]^{n} \colon \sum_{s=1}^n \epsilon_{i_s} = nx \right\}, \\
\label{n_plus}
\!\!\mathfrak{n}_+(x, n)&:=&\#\left\{ \vec{i} \in \mathbb{S}^{n} \colon \sum_{s=1}^n \epsilon_{i_s} = nx \right\},
\end{eqnarray}
where we are using the symbol $\#$ to denote the cardinality of a set. Then 
the evaluation of~(\ref{impo11}) can be reformulate as
\begin{equation}
\mathcal{R}_n(\mathcal{A}, \hat{H}) =  \max\left\{ \delta : \forall x \;  \mathfrak{n}_+(x, n) \leq \mathfrak{n}_0(x-\delta, n)   \right\} \; .
\label{criterio_Wdet}
\end{equation}
We observe incidentally that $\mathfrak{n}_0(x, n)$ and $\mathfrak{n}_+(x, n)$ can be expressed as polynomial coefficients in the expansions
\begin{eqnarray}\nonumber 
\left( \sum_{j=0}^{d-1} z^{\epsilon_j} \right)^n = \sum_{i=0}^{nd - 1} \mathfrak{n}_0 \left( \frac{i}{n} , n  \right) z^i \;, \\
\left( \sum_{j \in \mathbb{S}} z^{\epsilon_j} \right)^n = \sum_{i=0}^{nd - 1} \mathfrak{n}_+ \left( \frac{i}{n} , n  \right) z^i \; ,
\end{eqnarray}
which allow for their efficient numerical computation.
Let hence define 
\begin{eqnarray}
\mu_0 &:=& \frac{1}{d} \sum_{i=0}^{d-1} \epsilon_d\;,  \qquad  \sigma^2_0 := \frac{1}{d} \sum_{i=0}^{d-1} \epsilon^2_i - \mu^2_0 \;,\nonumber \\
\mu_+ &:=& \frac{1}{|\mathbb{S}|} \sum_{i \in \mathbb{S} } \epsilon_d \;, \qquad \!\!\sigma^2_+ := \frac{1}{|\mathbb{S}|} \sum_{i \in \mathbb{S}} \epsilon^2_i - \mu^2_+  \;. 
\end{eqnarray}
For large enough $n$, the Central Limit Theorem allows to approximate the energy level densities as
\begin{eqnarray}
\label{n_zero_gaussian}
\mathfrak{n}_0(x, n) &\simeq& \frac{d^n}{n}\frac{\sqrt{n}}{\sqrt{2\pi \sigma^2_0}} \exp\left[ - \frac{n(x-\mu_0)^2}{2\sigma^2_0} \right],  \\
\label{n_plus_gaussian}
\!\!\mathfrak{n}_+(x, n) &\simeq& \frac{|\mathbb{S}|^n}{n}\frac{\sqrt{n}}{\sqrt{2\pi \sigma^2_+}} \exp\left[ - \frac{n(x-\mu_+)^2}{2\sigma^2_+} \right], 
\end{eqnarray}
see  Fig.~\ref{fig:gaussiane} for an illustrative example. 
Exploiting Eqs.~(\ref{n_zero_gaussian}) and (\ref{n_plus_gaussian}), 
 the condition $\forall x \; \mathfrak{n}_+(x, n) \leq \mathfrak{n}_0(x-\delta, n) $ from~(\ref{criterio_Wdet}) becomes
\begin{eqnarray}
 &&\tfrac{|\mathbb{S}|^n}{\sqrt{2\pi \sigma^2_+}} \exp\left[ - \tfrac{n(x-\mu_+)^2}{2\sigma^2_+} \right] 
 \\ && \qquad\qquad \leq\nonumber
\tfrac{d}{\sqrt{2\pi \sigma^2_0}} \exp\left[ - \tfrac{n(x-\delta-\mu_0)^2}{2\sigma^2_0} \right] \; ,
\quad \forall x 
\end{eqnarray}
which with some simple algebraic manipulation can be cast in the form
\begin{eqnarray}\label{quadratic_form}
&&x^2\left( \tfrac{1}{2 \sigma^2_+} - \tfrac{1}{2 \sigma^2_0} \right) - 2x\left( \tfrac{\mu_+}{2\sigma^2_+} - \tfrac{\mu_0 + \delta}{2\sigma^2_0} \right)
 \\
&& \qquad + \left(  \tfrac{\mu^2_+}{2\sigma^2_+} - \tfrac{(\mu_0 + \delta)^2}{2\sigma^2_0} + \ln \tfrac{d}{ | \mathbb{S} |  }  + \tfrac{1}{n}\ln\tfrac{\sigma_+}{\sigma_0} \right) \geq 0 \; . \quad \forall x \nonumber 
\end{eqnarray}
The discriminant  of the above quadratic form is equal to
\begin{equation}
\Delta = \frac{(\mu_0 - \mu_+ + \delta)^2}{\sigma^2_+\sigma^2_0} - \left( \ln \tfrac{d}{|\mathbb{S}|  }  + \tfrac{1}{n}\ln\tfrac{\sigma_+}{\sigma_0} \right) \frac{2\sigma^2_0 - 2\sigma^2_+}{\sigma^2_+\sigma^2_0} \; .
\label{discriminante}
\end{equation}
In order for the condition~(\ref{quadratic_form}) to hold true, we need that $\sigma^2_0 > \sigma^2_+$ and that $\Delta \leq 0$. Solving~(\ref{discriminante}) for $\delta$, we find that the requirement $\Delta \leq 0$ is equivalent to
\begin{equation}
\delta \leq \mu_+ - \mu_0 + \sqrt{2\left( \sigma^2_0 - \sigma^2_+ \right)} \sqrt{\left( \ln \tfrac{d}{|\mathbb{S} |  }  + \tfrac{1}{n}\ln\tfrac{\sigma_+}{\sigma_0}  \right)} \;, 
\end{equation}
which, in the $n \to \infty$ limit, leads to the estimation
\begin{eqnarray}
\label{CLT_estimation}
\mathcal{R}(\mathcal{A}, \hat{H} )  \simeq \mu_+ - \mu_0 + \sqrt{2\left( \sigma^2_0 - \sigma^2_+ \right)} \sqrt{\ln \tfrac{d}{|\mathbb{S}|  } } \;. 
\end{eqnarray}

\section{Conclusions}
\label{sec:conclusions}

We have derived upper and lower bounds on the asymptotic maximal deterministic work extraction (MDEW) rate, which quantifies the maximal work that can be extracted from a quantum system, without fluctuations, in the limit of infinite copies of the system.
We found a lower bound that is strictly greater than zero for any Hamiltonian with rational spectra, meaning that, given enough copies of the system, deterministic work extraction is always possible for such Hamiltonians. Numerical evidence suggests that the actual MDEW rate may coincide with, or be very close to, the upper bound we derived, but we were not able to prove this definitively.

For Hamiltonians with incommensurable energy levels, although strictly deterministic work extraction may not be achievable, we have shown that with enough copies it is possible to bound the fluctuations in the extracted work to an arbitrarily small tolerance. Our protocols for bounded-fluctuation work extraction may find applications in quantum heat engines or batteries where a reliable, stable work output is critical.

More broadly, the scheme that we have introduced for manipulating ensembles of non-interacting copies of a quantum system may have implications for bounding fluctuations of other quantities through global quantum operations on multiple copies. This could aid in the design of stable quantum devices functioning in the finite-copies regime. An open question is whether allowing interactions between copies can enhance deterministic work extraction yields beyond the independent-copies bounds we have derived.

We acknowledge financial support by MUR (Ministero dell’Istruzione, dell’Universit\`a e della Ricerca) through the  PNRR MUR project PE0000023-NQSTI.

\clearpage
\bibliography{Wdet_references}
\appendix

\section{Optimality of  Eq.~(\ref{newbound2})} \label{deri} 
Here we show that at least for those cases where the restriction of $\hat{H}$ over the subspace ${\cal A}$ is not degenerate, then 
\begin{eqnarray} \label{newbound11min} 
\min_{\Phi(\hat{\rho})\in {\mathfrak{S}}_{\cal{A}}}  {\cal E}_{\rm tot} (\Phi(\hat{\rho}); \hat{H})=  \min_{\beta >0}  {\cal E}_{\rm tot} (\hat{\omega}_{\cal A}(\beta); \hat{H})\;. 
\end{eqnarray} 
To see this recall first that the total ergotropy of a generic state $\hat{\rho}$ corresponds to 
\begin{eqnarray}
{\cal E}_{\rm tot} (\hat{\rho}; \hat{H}) = \mbox{Tr}[ \hat{\rho} \hat{H}] - \mbox{Tr}[ \hat{\tau}_{\beta(\hat{\rho})} \hat{H}]\;, 
\end{eqnarray} 
with $\hat{\tau}_{\beta(\hat{\rho})}$ the thermal Gibbs state whose  inverse temperature
$\beta(\hat{\rho})$ 
is fixed in order to ensure that von Neumann entropy of 
 such state equal the one of  $\hat{\rho}$, i.e. 
 \begin{eqnarray}
 S(\hat{\tau}_{\beta(\hat{\rho})})=S(\hat{\rho})\;.
 \end{eqnarray}  
Observe next that if $\hat{H}$  is not degenerate, the entropy of the  Gibbs-like density matrices $\hat{\omega}_{\cal A}(\beta)$ span continuously
from $0$ (for $\beta\rightarrow  \infty$) to $\ln \mbox{Tr}[ \hat{\Pi}_{\cal A}]$  (for $\beta \rightarrow 0$) which is the maximum value allowed for states with support in ${\cal A}$. Given hence 
 $\Phi(\hat{\rho})$ a diagonal ensemble in ${\mathfrak{S}}_{\cal{A}}$,  we can always find  $\beta^\star$ such that the Gibbs-like density matrix $\hat{\omega}_{\cal A}({\beta^\star})$ has entropy equal to then one of $\Phi(\hat{\rho})$.  In such case $\hat{\tau}_{\beta(\Phi(\hat{\rho}))}$ and 
 $\hat{\tau}_{\beta(\hat{\omega}_{\cal A}({\beta^\star}))}$ will match allowing us to write 
 \begin{eqnarray}
{\cal E}_{\rm tot} (\Phi(\hat{\rho}); \hat{H}) &=& \mbox{Tr}[ \Phi(\hat{\rho})\hat{H}] - \mbox{Tr}[ \hat{\tau}_{\beta(\hat{\rho})} \hat{H}]\nonumber \\
&=& \mbox{Tr}[ \Phi(\hat{\rho})\hat{H}] - \mbox{Tr}[ \hat{\tau}_{\beta(\hat{\omega}_{\cal A}({\beta^\star}))}\hat{H}]\nonumber \\
&\geq& \mbox{Tr}[ \hat{\omega}_{\cal A}({\beta^\star})\hat{H}] - \mbox{Tr}[ \hat{\tau}_{\beta(\hat{\omega}_{\cal A}({\beta^\star}))}\hat{H}]\nonumber \\ &=& {\cal E}_{\rm tot} (\hat{\omega}_{\cal A}({\beta^\star}); \hat{H})\;, \label{im} 
 \end{eqnarray} 
 where the last inequality follows from the fact that $\hat{\omega}_{\cal A}({\beta^\star})$ is the state with the minimal energy among those which have the same support and the same entropy, so that 
  \begin{equation} \label{questaqui} 
\mbox{Tr}[ \Phi(\hat{\rho})\hat{H}] \geq \mbox{Tr}[ \hat{\omega}_{\cal A}({\beta^{\star}})\hat{H}] \;.
\end{equation} 
 To see this last fact observer that for $\beta$ arbitrary, invoking the Klein inequality we can write 
 \begin{eqnarray} 
0&\leq&  S(\Phi(\hat{\rho})\| \hat{\omega}_{\cal A}({\beta})) = -S(\Phi(\hat{\rho})) - \mbox{Tr}[ \Phi(\hat{\rho}) \ln \hat{\omega}_{\cal A}({\beta})] \nonumber \\
&=&  -S(\Phi(\hat{\rho}))  + \beta \mbox{Tr}[ \Phi(\hat{\rho}) \hat{H}] + \ln Z_{\cal A}(\beta)\nonumber \\
&=&  -S(\Phi(\hat{\rho}))  + \beta \mbox{Tr}[ \hat{\omega}_{\cal A}({\beta})\hat{H}] + \ln Z_{\cal A}(\beta) \label{impo111}  \\
&&+ \beta(\mbox{Tr}[ \Phi(\hat{\rho})\hat{H}] - \mbox{Tr}[ \hat{\omega}_{\cal A}({\beta})\hat{H}] )  \nonumber \\
&=&  -S(\Phi(\hat{\rho}))  + S(\hat{\omega}_{\cal A}({\beta}))  
+ \beta(\mbox{Tr}[ \Phi(\hat{\rho})\hat{H}] - \mbox{Tr}[ \hat{\omega}_{\cal A}({\beta})\hat{H}] ) \;,  \nonumber 
\end{eqnarray} 
where in the second identity we used the fact that $\Phi(\hat{\rho}) = \Phi(\hat{\rho}) \hat{\Pi}_{\cal A}$ to write 
\begin{eqnarray}
\mbox{Tr}[ \Phi(\hat{\rho})  \ln \hat{\omega}_{\cal A}({\beta})]&=&
 \mbox{Tr}[ \Phi(\hat{\rho}) \hat{\Pi}_{\cal A} \ln \left( \tfrac{ \sum_{i\in\mathbb{S}} \hat{\Pi}_{{\cal A}_i} e^{-\beta \epsilon_i}}{Z_{\cal A}(\beta) }\right)] 
 \nonumber \\
 &=& \beta \mbox{Tr}[ \Phi(\hat{\rho}) \hat{\Pi}_{\cal A}  \sum_{i\in\mathbb{S}} \hat{\Pi}_{{\cal A}_i}  \epsilon_i] + \ln Z_{\cal A}(\beta) \nonumber \\
 &=& \beta \mbox{Tr}[ \Phi(\hat{\rho})\hat{H}]  + \ln Z_{\cal A}(\beta) \;. 
 \end{eqnarray} 
 The inequality~(\ref{questaqui}) finally follows from (\ref{impo111}) by simply  
reorganizing the various  terms and taking $\beta =\beta^\star$.
Since~(\ref{im}) applies to all density matrices $\Phi(\hat{\rho})\in {\mathfrak{S}}_{\cal{A}}$ we conclude that  the minimization~(\ref{newbound11})
can be replaced with (\ref{newbound2}) leading to~(\ref{newbound11min}).

\section{Proof of eq.~(\ref{fixed_point})}
\label{sec:fixed_point}
We start by defining the the real functions
\begin{eqnarray}
\mathfrak{E}_0\left( \beta \right) := \Tr[\hat\tau_{\beta}\ham]\;, && \mathfrak{E}_\mathcal{A}\left( \beta \right) := \Tr[\hat\omega_{\mathcal{A}}(\beta)\ham] \;, \\
\mathcal{S}_0\left( \beta \right) :=S(\hat\tau_{\beta} ) \;, && \mathcal{S}_\mathcal{A}\left( \beta \right) := S(\omega_{\mathcal{A}}(\beta)) \;,\\
Z_0(\beta) := \Tr\left[ e^{\beta\ham} \right] \;, &&  Z_{\mathcal{A}}(\beta) := \Tr\left[ \hat\Pi_{\mathcal{A}} e^{\beta\ham} \right] \;,\\ \nonumber
\end{eqnarray}
with $\tau_{\beta}$ and $\hat\omega_{\mathcal{A}}(\beta)$ as in Eqs.~(\ref{fixed_point}) and (\ref{GIBBSlike}), respectively. The above functions satisfy the relationships
\begin{eqnarray}\label{SvsE}
\mathcal{S}_0 (\beta) &=& \beta \mathfrak{E}_0(\beta) + \ln Z_0(\beta) \;, \\
\mathcal{S}_{\mathcal{A}}
(\beta) &=& \beta \mathfrak{E}_{\mathcal{A}}(\beta) + \ln Z_\mathcal{A}(\beta) \; .
\end{eqnarray}
By deriving~(\ref{SvsE}) with respect to the variable $\beta$ we have the following relations:
\begin{eqnarray}
\label{derivateS}
\frac{\mbox{d} \mathcal{S}_0 }{\mbox{d} \beta} = \beta \frac{\mbox{d} \mathfrak{E}_0 }{\mbox{d} \beta} \;, \qquad
\frac{\mbox{d} \mathcal{S}_{\mathcal{A}} }{\mbox{d} \beta} = \beta \frac{\mbox{d} \mathfrak{E}_{\mathcal{A}} }{\mbox{d} \beta} \; .
\end{eqnarray} 
For fixed $\beta$ define now $\beta^\star$ the inverse temperature such that $S\left( \hat\omega_{\mathcal{A}}(\beta) \right) = S\left( \hat\tau_{\beta^\star} \right)$, i.e. 
\begin{eqnarray}
\beta^\star\left(\beta\right) = \mathcal{S}^{-1}_0 \left( S_{\mathcal{A}} (\beta) \right) \; .
\label{def_betastar}
\end{eqnarray}
Deriving~\ref{def_betastar} and then applying~(\ref{derivateS}) we have that
\begin{eqnarray}
\label{derivata_betastar}
\frac{\mbox{d} \beta^\star }{\mbox{d} \beta} = \frac{\mbox{d} \beta^\star }{\mbox{d} S_\mathcal{A}} \frac{\mbox{d} S_{\mathcal{A}} }{\mbox{d} \beta} = \frac{\mbox{d} \beta^\star }{\mbox{d} S_0} \frac{\mbox{d} S_{\mathcal{A}} }{\mbox{d} \beta} = \frac{\beta}{\beta^\star} \frac{\mbox{d} \beta^\star }{\mbox{d} \mathfrak{E}_0 } \frac{\mbox{d} \mathfrak{E}_{\mathcal{A}} }{\mbox{d} \beta} \; .
\end{eqnarray}
Notice next that the total ergotropy in  the right-hand-side of the upper bound~(\ref{newbound2}) can be expressed as
\begin{eqnarray}
\label{ergotot_fbeta}
{\cal E}_{\rm tot} (\hat{\omega}_{\cal A}(\beta); \hat{H}) = \mathfrak{E}_{\mathcal{A}}(\beta) - \mathfrak{E}_{0}(\beta^\star(\beta)) \; .
\end{eqnarray}

Deriving~(\ref{ergotot_fbeta}) and then using the chain rule and~(\ref{derivata_betastar}) we obtain
\begin{eqnarray}&&
\frac{\mbox{d}}{\mbox{d}\beta} {\cal E}_{\rm tot} (\hat{\omega}_{\cal A}(\beta); \hat{H})  \nonumber \\ 
&&\quad =\frac{\mbox{d}\mathfrak{E}_{\mathcal{A}}}{\mbox{d}\beta} - \frac{\mbox{d}}{\mbox{d}\beta} \mathfrak{E}_{0}(\beta^\star(\beta))= 
\frac{\mbox{d}\mathfrak{E}_{\mathcal{A}}}{\mbox{d}\beta} - \frac{\mbox{d}\mathfrak{E}_{0}}{\mbox{d}\beta^\star} \frac{\mbox{d} \beta^\star }{\mbox{d} \beta}   \nonumber \\ 
&&\quad =\frac{\mbox{d}\mathfrak{E}_{\mathcal{A}}}{\mbox{d}\beta} -
\frac{\beta}{\beta^\star}  \frac{\mbox{d}\mathfrak{E}_{0}}{\mbox{d}\beta^\star} \frac{\mbox{d} \beta^\star }{\mbox{d} \mathfrak{E}_0 } \frac{\mbox{d} \mathfrak{E}_{\mathcal{A}} }{\mbox{d} \beta} = \left( 1 - \frac{\beta}{\beta^\star}  \right) \frac{\mbox{d}\mathfrak{E}_{\mathcal{A}}}{\mbox{d}\beta} \; . \nonumber 
\end{eqnarray}
Every stationary point of ${\cal E}_{\rm tot} (\hat{\omega}_{\cal A}(\beta); \ham )$ must satisfy $\frac{\mbox{d}}{\mbox{d}\beta} {\cal E}_{\rm tot} (\hat{\omega}_{\cal A}(\beta); \hat{H}) = 0$, i.e.
\begin{eqnarray}
\left( 1 - \frac{\beta}{\beta^\star}  \right) \frac{\mbox{d}\mathfrak{E}_{\mathcal{A}}}{\mbox{d}\beta} = 0 \; .
\end{eqnarray}
Since $\frac{\mbox{d}\mathfrak{E}_{\mathcal{A}}}{\mbox{d}\beta} < 0$, we arrive at the conclusion that the bound~(\ref{newbound2}) is attained at a value of $\beta$ such that~
\begin{eqnarray}
 \beta^{\star} \left( \beta \right) = \beta\;,
 \end{eqnarray} 
which proves the thesis.

\end{document}